\newcommand{\jcz}[1]{{\color{blue}(JCz: \textit{#1})}}
\newcommand{\vdga}[1]{{\color{purple}(VGA: \textit{#1})}}
\theoremstyle{plain}
\DeclareMathOperator{\CNOT}{CNOT}
\newtheorem{defn}{Definition}
\newtheorem{lem}{Lemma}
\newtheorem{obs}{Observation}
\begin{document}


\title{Efficient state estimation on quantum processors}


\author{V. Gonz{\'a}lez Avella}
\affiliation{Departamento de F{\'i}sica, Facultad de Ciencias B{\'a}sicas,
Universidad de Antofagasta, Casilla 170, Antofagasta, Chile}

\author{A. Vega Vargas}

\affiliation{Departamento de Matemática y Ciencia de la Computación, Universidad de Santiago de Chile, Santiago, Chile}

\author{T. Merlo Vergara}

\affiliation{Instituto de F\'isica, Pontificia Universidad Cat\'olica de Chile, Casilla 306, Santiago, Chile}

\author{K. de la Ossa Doria}
\affiliation{Departamento de F{\'i}sica, Facultad de Ciencias B{\'a}sicas,
Universidad de Antofagasta, Casilla 170, Antofagasta, Chile}

\author{J. Czartowski}

\affiliation{Faculty of Physics, Astronomy and Applied Computer Science, Jagiellonian University, ul. Łojasiewicza 11, 30-348 Kraków, Poland}

\affiliation{School of Physical and Mathematical Sciences, Nanyang Technological University,
21 Nanyang Link, 637371 Singapore, Republic of Singapore}

\author{D. Main}

\affiliation{Department of Physics, University of Oxford, Clarendon Laboratory,
Parks Road, Oxford OX1 3PU, United Kingdom}

\author{G. Araneda}

\affiliation{Department of Physics, University of Oxford, Clarendon Laboratory,
Parks Road, Oxford OX1 3PU, United Kingdom}

\author{A. Delgado}

\affiliation{Departamento de F\'isica e Instituto Milenio de Investigaci\'on en \'Optica, Universidad de Concepci\'on, Casilla 160-C, Concepci\'on, Chile}

\author{D. Goyeneche}

\affiliation{Instituto de F\'isica, Pontificia Universidad Cat\'olica de Chile, Casilla 306, Santiago, Chile}

\date{\today}

\begin{abstract}


We present two scalable, entanglement-free methods for estimating the collective state of an $n$-qubit quantum computer. The first method consists of a fixed set of five quantum circuits—independent of the number of qubits—that avoid the use of entanglement as a measurement resource, relying instead on classical communication between selected pairs of qubits. The second method requires $2n + 1$ circuits, each applying a single local gate to one of the $n$ qubits during the measurement stage. Unlike traditional estimation methods, our approaches yield an analytical reconstruction formula that eliminates the need for costly post-processing to estimate the quantum state, thereby enabling scalability to relatively large system sizes. We experimentally compare both methods on freely available IBM quantum processors and observe how state estimation varies with increasing numbers of qubits and shots. We further validate our results by estimating a 4-qubit entangled state distributed across two remote ion-trap quantum processors, demonstrating that the optimized $2n + 1$ tomographic scheme achieves estimates consistent with standard methods while requiring exponentially fewer measurements.


\end{abstract}

\pacs{03.65.Ta, 03.65.Wj, 03.67.-a}

\maketitle

\section{Introduction}
    Quantum hardware, such as quantum computers and simulators, has grown steadily in size and complexity over the past few decades. Currently, superconducting qubits~\cite{Wendin_2017,Kjaergaard_2020}, trapped ions~\cite{Bruzewicz_2019}, cold optical lattices~\cite{Cornish_2024}, and Rydberg atom arrays~\cite{Cong_2022} have been used to demonstrate coherent control from tens to hundreds of qubits and gate layers, with thousands of qubits expected to be achieved in the near future, a regime where quantum hardware simulation on classical hardware becomes infeasible.
    
    In this scenario, techniques for efficient and accurate characterization and benchmarking of multi-qubit quantum hardware become essential. Estimation of quantum states and processes is achieved by post-processing through statistical inference of information acquired through measurements on a finite ensemble~\cite{Czerwinski_2022,Bisio_2009}. Universal estimation techniques, which reconstruct all pure or mixed states, suffer from algorithmic complexity that scales exponentially with the number of qubits~\cite{Anshu_2024}. It is possible to reduce complexity by resorting to a priori information. For example, five bases~\cite{goyeneche2015five} have been shown to allow the estimation of pure $n$-qubit states without the use of post-processing based on statistical inference, which involves a total number of measurement outcomes that scales linearly with the dimension.
    
    Measurements are realized in quantum hardware through quantum circuits, which can typically involve the application of many entangling gates, such as the controlled NOT gate. These gates have a high error rate that affects the accuracy of the estimation, and therefore it is desirable that during the current generation of noisy intermediate-scale quantum (NISQ) hardware~\cite{Preskill_2018,Chen_2023}, attempts are made to estimate states and processes solely by local means. In this direction, it has recently been shown that it is possible to estimate pure $n$-qubit states through strictly local bases~\cite{Pereira_2022}, although they scale in quantity linearly with the number of qubits.
    
    In recent decades, several quantum state estimation methods have been proposed, each of them solving one or more fundamental aspects related to the recovery of complete information of a quantum system. Standard quantum tomography for an ensemble of $n$ qubits~\cite{James_2001} is based on measuring all 
    chains of local Pauli operators, 
    followed by applying maximum likelihood estimation. 
    The number of required measurement outcomes can be reduced by considering a symmetric, informationally complete, positive-operator-valued measure~\cite{Renes_2004} 
    or mutually unbiased bases~\cite{Wootters_1989}, at the cost of requiring entanglement as a resource at the measurement stage~\cite{wiesniak2011entanglement,czartowski2018entanglement}. Furthermore, quantum compressed sensing~\cite{gross2010quantum} can estimate mixed states from a selected subset of Pauli chains, achieving an exponential reduction in the number of measurements as a function of the number of qubits $n$. However, all the methods mentioned above require resources that scale exponentially with $n$, making their practical implementation unfeasible. Having quantum states with a matrix product state structure~\cite{Cramer_2010} or permutation invariance~\cite{Toth_2010} also leads to a significant reduction in the total number of measurement results when using local measurements. Although the last two methods do not provide an explicit reconstruction procedure, they lead to an advantageous reduction of the classical computational cost of the post-processing stage. With the aim of reconstructing pure quantum states, for instance, the collective state of a quantum computer, it has been shown that pure $n$-qubit states can be accurately estimated using local $mn+1$ measurement bases (with $m \geq 2$) and without the need for post-processing based on statistical inference methods~\cite{Pereira_2022}. However, this approach requires solving an exponentially growing number of times a linear system of equations, scaling with the Hilbert space dimension $2^n$. Consequently, this method does not exhibit favorable scalability for a relatively large number of qubits.

\begin{figure}[t]
    \centering
    \includegraphics[width=0.48\textwidth]{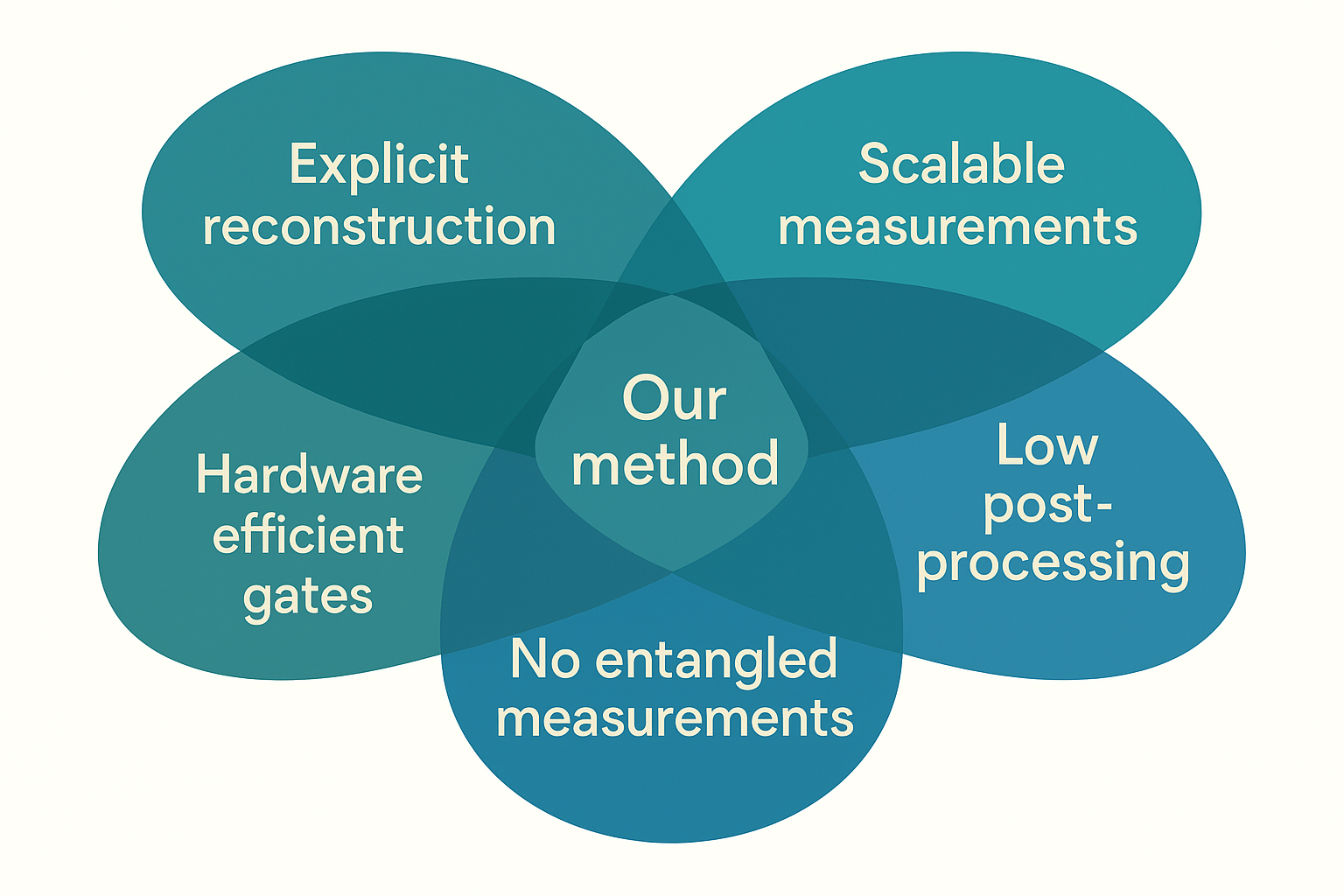}
    \caption{\justifying Schematic representation of the key ingredients addressed by our proposal. Some highly desirable properties of a tomographic method include an explicit reconstruction formula~\cite{scott2006tight,goyeneche2015five,vargas2024near,james2001measurement}, hardware-efficient gates~\cite{gross2010quantum,Pereira_2022,guo2024quantum}, a scalable number of measurements~\cite{goyeneche2015five,vargas2024near,Pereira_2022}, low post-processing cost to reconstruct a positive semidefinite operator~\cite{kaznady2008quantum,goyeneche2015five}, and the absence of entanglement in the measurement stage~\cite{james2001measurement,cramer2010efficient,Pereira_2022}.
While existing methods typically exhibit only a subset of these desirable features, our approach stands out as a powerful and comprehensive method combining analytical simplicity with experimental scalability. This comprehensive optimization represents a significant advance in multi-qubit quantum state tomography, as demonstrated by the experimental results presented in Sections~\ref{sec:IBM} and~\ref{sec:Oxford}.}
    \label{fig:venn_tomography}
\end{figure}

This work resolves a long-standing challenge in quantum pure state estimation by providing scalable, entanglement-free, and computationally efficient methods that remains accurate for large systems. Our estimation methods use separable measurement bases, one of them considering a fixed number of five bases and the other one scaling linearly with the number of qubits. Also, they are implementable by using hardware-efficient quantum gates and provide an explicit reconstruction formula that does not require computationally expensive post-processing to estimate an unknown quantum pure state (see Fig. \ref{fig:venn_tomography}). In particular, our scalable and robust methods do not require solving a system of equations at any stage—neither during the reconstruction process nor in the post-processing phase. As a result, we achieved unprecedented estimation fidelity in the experimental characterization of genuinely entangled quantum states of more than 10 qubits, using freely available IBM quantum computers. We also validate our results by estimating the 4-qubit entangled state of two remote two-qubit ion-trap quantum processors. 
    
In addition, we present a method that improves upon the five-basis tomographic scheme~\cite{goyeneche2015five} used to estimate arbitrary $d$-dimensional quantum states. In the case of an $n$-qubit system, two of these five bases are implemented through quantum circuits that require a large number of two-qubit entangling gates, which limits the applicability of the method to small system sizes and reduces the estimation fidelity. We introduce a new set of five bases whose implementation relies solely on local operations assisted by classical communication (LOCC), thereby extending the applicability of the method to larger numbers of qubits and improving the estimation fidelity.

    
The measurement bases for both methods are generated through simple constructive procedures, and the state estimate is obtained via direct calculations, thereby avoiding post-processing based on statistical inference methods whose computational cost becomes prohibitive as the number of qubits increases. Moreover, avoiding controlled gates in the generation of the measurement bases leads to higher estimation fidelity.
    
    
    





\section{Theoretical background}


     
    We consider the problem of estimating the pure states of an $n$-qubit system. The unknown quantum state $\ket{\psi}$ belongs to the composite Hilbert space $\mathcal{H}_2^{\otimes n}$ and can be represented in the computational basis as
    \begin{align}\label{eq:state}
        \ket{\psi} & = 
        \sum_{\vb{j}=0}^{2^n-1} a_{\vb{j}}\ket{\vb{j}},
    \end{align}
    where $a_{\vb{j}}=\abs{a_{\vb{j}}}e^{i\phi_{\vb{j}}}$ are complex probability amplitudes, $\sum_{j}|a_j|^2=1$, and the state $\ket{\vb{j}}$ denotes a member of the computational basis. The integer label $\vb{j}=0,\dots 2^n-1$ is connected to the labeling of the individual qubit states $\ket{j_i}$ with $j_i=\{0,1\}$ for $i=0,\dots,n$ through the binary representation $\vb{j}=\sum_{i=0}^{n-1} j_i 2^i$, that is, $\ket{\mathbf{j}}=\ket{j_{0}\dots j_{n-1}}$, where the tensor products have been omitted for the sake of clarity.

    A key role in our method is played by the \emph{polarization identity}~\cite{schechter1996handbook}. According to this, for any two complex numbers $a_0,a_1\in\mathbb{C}$, it holds that    \begin{equation}\label{eq:polar_iden}
        4 a_0 a_1^* = \sum_{k=0}^3i^k\abs{a_0+i^ka_1}^2.
    \end{equation}
    Thus, the phase difference $\phi_0-\phi_1$ between $a_0$ and $a_1$ can be evaluated from a combination of squares of absolute values. The five-basis tomographic method is based on this mathematical identity~\cite{goyeneche2015five}, which establishes a direct connection between physically accessible information and entries of the target quantum state. To illustrate this, let us consider the simplest case of a single qubit prepared in a pure state $\ket{\psi}=a_0|0\rangle+a_1|1\rangle$ with $a_0\geq0$, $a_1=|a_1|e^{i\phi_1}$, and $|a_0|^2+|a_1|^2=1$. The first measurement basis is the computational basis that allows us to estimate the absolute values $\abs{a_0}$ and $\abs{a_1}$. Second, the complex phase $e^{i\phi_1}$ can be estimated through the polarization identity \eqref{eq:polar_iden} as follows:
    \begin{eqnarray}\label{pol}
                4a_0a_1^* \!&=&\!4 \abs{a_0}\abs{a_1}e^{-i\phi_1}\nonumber\\ \!&=&\! \abs{a_0\! +\! a_1}^2\!+\! \abs{a_0\! -\! a_1}^2 + i\abs{a_0\! +\! i a_1}^2 - i\abs{a_0\! - i a_1}^2\nonumber \\
                \!& =&\! 2\qty(
                \abs{\ip{+}{\psi}}^2
                -\abs{\ip{-}{\psi}}^2
                +i\abs{\ip{+i}{\psi}}^2
                -i\abs{\ip{-i}{\psi}}^2),\nonumber\\
            \end{eqnarray}
    where $\ket{\pm}=(\ket{0}\pm\ket{1})/\sqrt{2}$ and $\ket{\pm i}=(\ket{0}\pm i\ket{1})/\sqrt{2}$ are the eigenvector bases of Pauli matrices $X=|0\rangle\langle1|+|1\rangle\langle0|$ and $Y=-i|0\rangle\langle1|+i|1\rangle\langle0|$, respectively. Note from (\ref{pol}) that the statistics produced by the above three measurement bases, that is, $\{\ket{0},\ket{1}\}$, $\{\ket{+},\ket{-}\}$ and $\{\ket{+i},\ket{-i}\}$, allow us to univocally reconstruct the single-qubit state $|\psi\rangle$. Moreover, these three measurement bases determine a maximal set of \emph{mutually unbiased bases}, which are informationally complete for any single-qubit density matrix~\cite{ivonovic1981geometrical}. 

    The above approach based on the polarization identity can be extended to the case of a $n$-qubit system prepared in an unknown quantum pure state $\ket{\psi}$ Eq.~(\ref{eq:state}). The amplitudes $|a_{\vb{j}}|$ are determined through measurements carried out in the computational basis, whereas the complex phases $e^{i\phi_{\vb{j}}}$ are estimated by solving a set of polarization identities as     \begin{equation}\label{pol_identity}
                4a_{\vb{j}}a_{\vb{j}'}^* = \sum_{\ell=0}^3i^{\ell}\abs{a_{\vb{j}} +i^{\ell} a_{\vb{j}'}}^2
                = \sum_{\ell=0}^3i^{\ell}\abs{(\langle \vb{j}| +i^{\ell} \langle \vb{j}'|)|\psi\rangle}^2.
    \end{equation}
    This system of equations is overdetermined and thus there are several ways to find the complex phases $e^{i\phi_{\vb{j}}}$. 
For example, choosing $\vb{j}' = \vb{j} + 1 \operatorname{mod}{2^n}$ leads to the following set of $4\times2^n$ states  
\begin{equation}
\label{5bases2015}
\qty{\ket{\vb{j}} + i^\ell \ket{\vb{j}'}}
_{\vb{j}=\qty{0,1}^n}.
\end{equation}

These states can be grouped into four orthonormal bases $\mathfrak{B}_k$ ($k=1,\dots,4$), which, together with the computational basis $\mathfrak{B}0$, constitute an informationally complete set for pure quantum states. This is because the coefficients of the pure state $|\psi\rangle$ in the computational basis, namely the set ${a{\vb{j}}}$, can be estimated by recursively solving the system of equations~(\ref{pol_identity}) for $\vb{j}'=\vb{j}+1$. These five bases form the core of the five-basis tomographic method~\cite{goyeneche2015five}. Despite the success of this method for reconstructing $d$-dimensional quantum states, these five measurement bases cannot be efficiently implemented in a quantum computer. In fact, the three measurement bases $\mathfrak{B}_0$, $\mathfrak{B}_1$, and $\mathfrak{B}_2$ are fully separable for any number of qubits, but bases $\mathfrak{B}_3$ and $\mathfrak{B}_4$ require the implementation of a sequence of CNOT gates that rapidly increases with $n$. This naturally leads to the question of whether it is possible to estimate pure quantum states via the polarization identity without entangled measurement bases.



\begin{figure}[t!]
        \centering
        \begin{subfigure}{0.235\textwidth}
                    \includegraphics[width=\textwidth]{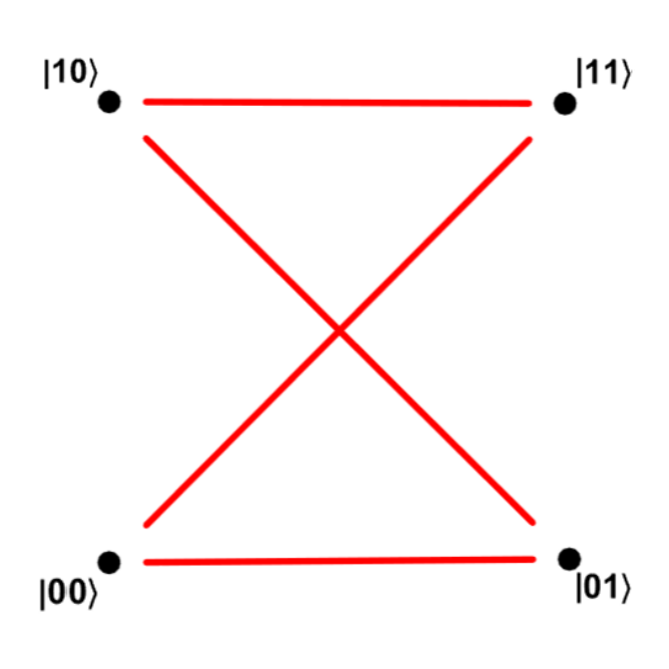}
                    \caption{$\mathcal{G}^{ent}_{2}$}
                    \label{fig:graph2q_a}
        \end{subfigure}
         \begin{subfigure}{0.235\textwidth}
                    \includegraphics[width=\textwidth]{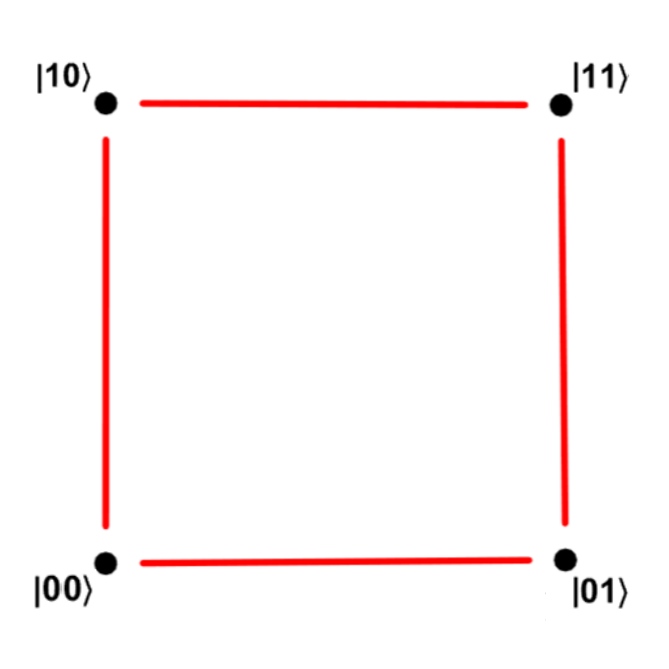}
                    \caption{$\mathcal{G}^{sep}_{2}$}
                    \label{fig:graph2q_b}
        \end{subfigure}
        \caption{\justifying Graphs $\mathcal{G}^{ent}_{2}$ and $\mathcal{G}^{ent}_{2}$ for a $2$-qubit system associated to  (a) entangled and (b) separable measurements, respectively. Note that crossing edges ($\slash$ and \textbackslash) represent entangled measurement states, whereas parallel edges~(\text{---}~and~$|$) represent separable measurement states. In general, edges in the graph $\mathcal{G}$ correspond to separable measurement states if and only if they are also edges of the hypercubic graph.}
        \label{fig:2-qubitexample}
    \end{figure}

\begin{figure}[t!]
        \centering
        \includegraphics[width=0.8\linewidth]{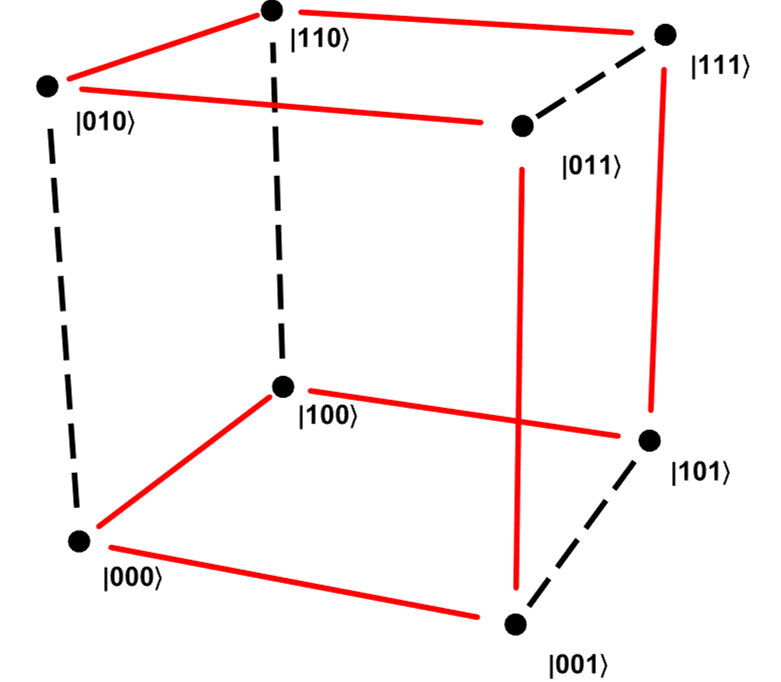}
        \caption{\justifying (Color online) Degree $3$ hypercubic graph with a hamiltonian cycle highlighted
        }
        \label{fig:hypercube_3q}  
    \end{figure}

In this work, we provide a positive answer to this question in two different ways. First, we realize that a rigid rotation of the five bases (\ref{5bases2015}) generates a set of measurement bases that does not contain entangled vectors, for any number of qubits; see Section \ref{sec:5bases}. However, two of these five bases cannot be written as a tensor product of unitary matrices, thus requiring classical communication to be implemented in practice. For this reason, we introduce a second set of measurements consisting of $2n+1$ orthonormal bases that are informationally complete for $n$-qubit pure states, each constructed as a tensor product and requiring no classical communication (see Section~\ref{sec:2n+1_basis}). Furthermore, we demonstrate that our $2n+1$ bases represent the maximal possible number of tensor product bases that can be defined through the polarization identity (\ref{pol_identity}). 


Before introducing the two estimation methods, note that there are $2^n(2^n-1)$ possible choices for a polarization identity (\ref{pol_identity}). A suitable subselection of them is required for an estimation method. To provide a simple visualization of this choice, we define an undirected graph $\mathcal{G}(V,E)$, where each vertex $v_{\vb{j}}\in V$ is associated to a coefficient $a_{\vb{j}}$ from the decomposition (\ref{eq:state}) and an edge $e_{\vb{j},\vb{j}'}\in E$ implies that a polarization identity (\ref{pol_identity}) has been considered for the pair of indices $\{\vb{j},\vb{j}'\}$. Additionally, we associate with each vertex $v_{\vb{j}}$ the weight $|a_{\vb{j}}|^2$. For example, the set of polarization identities considered in (\ref{5bases2015}) produces the $2^n$-cycle graph, for any number of qubits $n$. A cycle graph consists of a single closed chain of vertices, where every vertex is connected to 2 edges. In addition, it is simple to note that any subset of equations (\ref{eq:polar_iden}) associated with a connected graph has a unique solution. Indeed, the system of equations can be solved recursively. 

\section{Five disentangled bases}\label{sec:5bases}

Let us start our construction by recalling the five entangled bases $\mathfrak{B}_k$ from \cite{goyeneche2015five} for a 2-qubit system, given by
$$\mathfrak{B}_0=\{|00\rangle,|01\rangle,|10\rangle,|11\rangle\},\mathfrak{B}_1=\{|0,\pm\rangle,|1,\pm\rangle\},$$
$$\mathfrak{B}_2=\{|0,\pm i\rangle,|1,\pm i\rangle\},\mathfrak{B}_3=\{|00\rangle\pm|11\rangle,|01\rangle\pm|10\rangle\},$$
$$\mathfrak{B}_4=\{|00\rangle\pm i|11\rangle,|01\rangle\pm i|10\rangle\},$$
where normalization factors are omitted for simplicity. Note that bases $\mathfrak{B}_3$ and $\mathfrak{B}_4$ are formed by maximally entangled states. In particular, $\mathfrak{B}_3$ is the Bell basis. 

The graph $\mathcal{G}^{ent}_{2}$ associated with the bases above is shown in Fig.~\ref{fig:graph2q_a}, where every edge corresponds to a polarization identity involving the connected vertices. Horizontal and vertical edges denote measurements on separable states, while crossed edges denote measurements on entangled states. The graph $\mathcal{G}^{sep}_{2}$ illustrated in Fig.~\ref{fig:graph2q_b} only contains vertical and horizontal edges, and thus it is associated with polarization identities that involve separable states only. Remarkably, the graph $\mathcal{G}^{ent}_{2}$ can be transformed into the graph $\mathcal{G}^{sep}_{2}$ by the $10\leftrightarrow11$ vertex permutation. Furthermore, the action of this vertex permutation in the Hilbert space is implemented by the $\CNOT_{1,0}$ gate, defined by $\CNOT_{j,k}=|0\rangle_j\langle0|\otimes\mathbb{I}_k+|1\rangle_j\langle1|\otimes X_k$, where $j$-th qubit acts as a control and $k$-th qubit as a target. Thus, we define new bases $\{\tilde{\mathfrak{B}}_j\}$ as a rigid rotation ($\CNOT_{1,0}$) of the bases $\{\mathfrak{B}_j\}$, leading to   
 $\tilde{\mathfrak{B}}_0\equiv\mathfrak{B}_0$, $\tilde{\mathfrak{B}}_1\equiv\mathfrak{B}_1$, $\tilde{\mathfrak{B}}_2\equiv\mathfrak{B}_2$, and
$$\tilde{\mathfrak{B}}_3=\{|\pm,0\rangle,|\pm,1\rangle\},\,\tilde{\mathfrak{B}}_4=\{|\pm i,0\rangle,|\pm i,1\rangle\},$$
where the equivalence symbol ($\equiv$) means that these old and new bases are essentially the same up to permutation of elements and global phases, leading to fully equivalent measurements. Note that the old bases $\mathfrak{B}_0$, $\mathfrak{B}_1$, and $\mathfrak{B}_2$ are invariant under the action of $\CNOT_{1,0}$, whereas this non-local gate disentangles the old bases $\mathfrak{B}_3$ and $\mathfrak{B}_4$.

\begin{figure*}
        \centering
        \includegraphics[width=\linewidth]{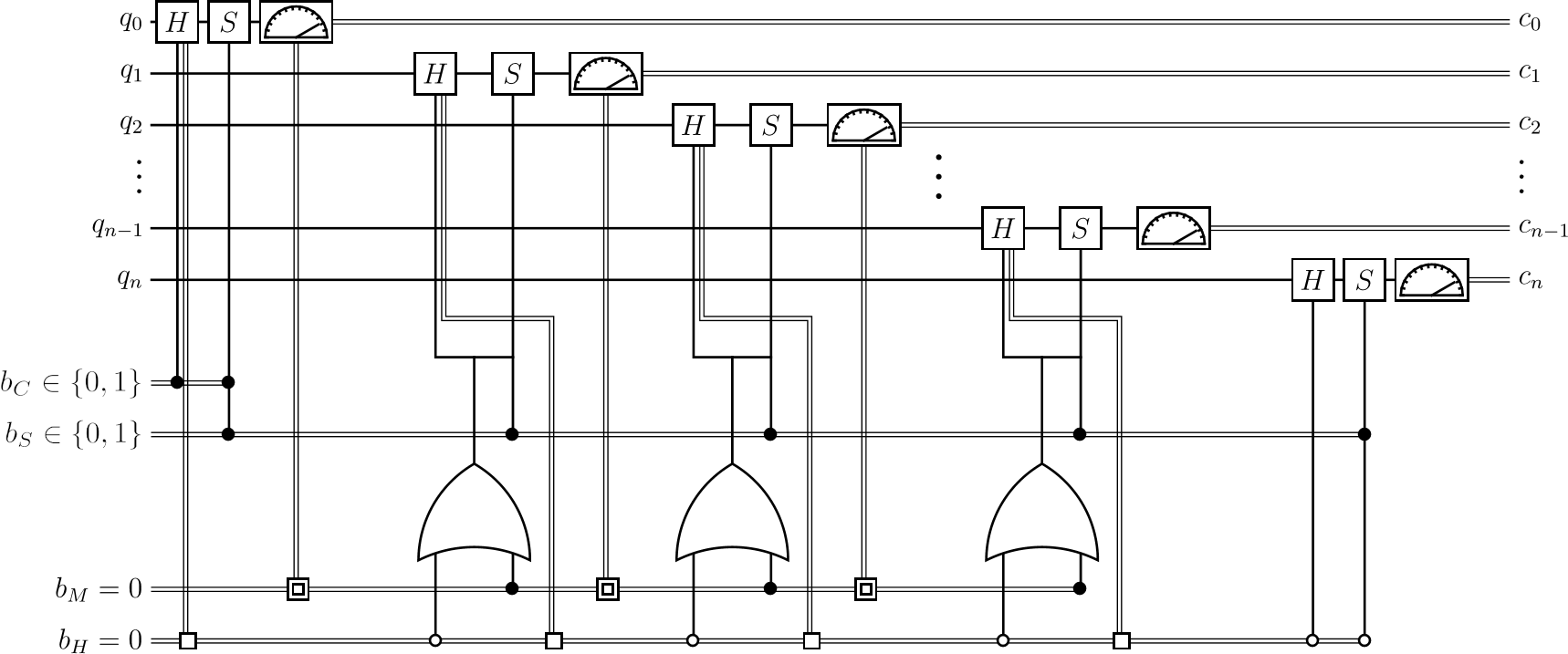}
        \caption{\justifying Classical-quantum circuit implementing bases $\tilde{\mathfrak{B}}_k$ with $k=1,\dots,4$. Horizontal single and double lines correspond to quantum and classical registers, respectively. Vertical single lines are always connected to quantum operations  applied conditionally on the values stored in classical registers, with open circle corresponding to conditioned on $0$, filled circle to conditioned on $1$ and we use a standard notation for the classical OR gate. 
        Vertical double line corresponds to operation applied to the classical register. Single square corresponds to \texttt{set} operation -- if the quantum gate has been applied, the corresponding classical register is set to 1, but not reverted. Double square corresponds to \texttt{overwrite} -- result of the corresponding measurement is written over whatever value the corresponding classical register had before.
        } 
        \label{fig:LOCC_Circuit}
    \end{figure*}

This approach can be extended to an arbitrary number of qubits. For a $n$-qubit system, the $2^n$-cycle graph, denoted as $\mathcal{G}^{ent}_n$, induces three fully separable $\mathfrak{B}_0,\mathfrak{B}_1,\mathfrak{B}_2$ and two genuinely entangled $\mathfrak{B}_3,\mathfrak{B}_4$ bases. Here, an edge $e_{\vb{j},\vb{j}'}$ with Hamming distance $d_H(\vb{j},\vb{j}') >1$, that is, binary strings $\vb{j},\vb{j}'$ such that the number of positions at which the corresponding characters or bits are different is greater than 1, implies an entangled state $\ket{\vb{j}}+\ket{\vb{j}'}$, or equivalently a crossed edge. This result is formally stated in the Lemma \ref{lem:ham_1_sep}. One of the main theoretical contributions of the present work consists in showing that a rigid rotation of the five bases $\mathfrak{B}_j$ produces five new bases $\tilde{\mathfrak{B}}_j$ composed of fully separable states for any number $n$ of qubits. This is summarized in the following;
\begin{restatable}[]{prop}{cnotprop}\label{prop:cnot_prop}
    Five basis of the form $\tilde{\mathfrak{B}}_i = S\mathfrak{B}_i$ with
    \begin{equation}\label{cnots}
    S = \CNOT_{n-1,n-2}\cdots \CNOT_{2,1}\CNOT_{1,0}
    \end{equation}
    consist exclusively in product states and can be performed by quantum circuits that require local operations and classical communication between the qubits.
\end{restatable}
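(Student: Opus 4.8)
The plan is to split the statement into its two assertions and handle them in turn: first, that every vector in each $\tilde{\mathfrak{B}}_i$ is a product state, and second, that the associated measurement admits an explicit LOCC realization. The conceptual key to the first part is to stop viewing $S$ as an abstract sequence of entangling gates and instead read off its action as a relabeling of the computational basis. Tracking a ket $\ket{j_0\cdots j_{n-1}}$ through the gates in the order they act (rightmost first), $\CNOT_{1,0}$ sends $j_0\mapsto j_0\oplus j_1$, then $\CNOT_{2,1}$ sends $j_1\mapsto j_1\oplus j_2$, and so on. First I would verify this computation carefully, noting that in this particular ordering each qubit is used as a control before it is itself overwritten as a target, so that the overall map is $S\ket{\vb{j}}=\ket{\vb{b}}$ with $b_i=j_i\oplus j_{i+1}$ for $i<n-1$ and $b_{n-1}=j_{n-1}$. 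This is exactly the binary-to-reflected-Gray-code map $g$.

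The payoff is that the defining property of the cyclic reflected Gray code is precisely what the statement needs: consecutive integers $\vb{j}$ and $\vb{j}+1$ are sent to codewords $g(\vb{j})$ and $g(\vb{j}+1)$ differing in exactly one bit, and the wraparound pair $2^n-1\mapsto 2^{n-1}$, $0\mapsto 0$ likewise differs in a single bit. Since all five bases $\mathfrak{B}_0,\dots,\mathfrak{B}_4$ are assembled from the polarization states $\ket{\vb{j}}+i^\ell\ket{\vb{j}+1\bmod 2^n}$ living on the $2^n$-cycle, applying $S$ carries each such vector to $\ket{g(\vb{j})}+i^\ell\ket{g(\vb{j}+1)}$, whose two computational kets differ in a single qubit. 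By Lemma~\ref{lem:ham_1_sep}, a polarization vector is a product state exactly when its two labels are at Hamming distance one, so every element of every $\tilde{\mathfrak{B}}_i$ is a product state. The attractive feature of this viewpoint is that it treats the "even'' edges (already carrying $\mathfrak{B}_1,\mathfrak{B}_2$ to product states) and the "odd'' edges (responsible for the entanglement of $\mathfrak{B}_3,\mathfrak{B}_4$) on a completely uniform footing.

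For the second assertion I would deliberately avoid any general principle, since an orthonormal basis of product states need not be distinguishable by local measurements at all (nonlocality without entanglement), and would instead exhibit an explicit adaptive protocol. The subtlety, already visible at $n=3$, is that in $\tilde{\mathfrak{B}}_3$ and $\tilde{\mathfrak{B}}_4$ the single qubit carrying the $\ket{\pm}$ (resp.\ $\ket{\pm i}$) superposition is not fixed: its position is the bit flipped at the relevant Gray-code transition, which runs through the ruler sequence $0,1,0,2,0,1,0,\dots$ and hence depends on the measured values of the remaining qubits. The construction I would give exploits the recursive, reflective structure of the Gray code: measure a designated qubit in the computational basis and use its outcome to decide—and classically broadcast—which of the remaining qubits is measured in $X$ (or $Y$) and which in $Z$, recursing on the rest. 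This is by definition a sequence of local operations with classical communication, and it is what the circuit in Fig.~\ref{fig:LOCC_Circuit} realizes.

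I expect the main obstacle to be exactly this second part: establishing that the adaptive branching is globally consistent for every $n$—that the classical feed-forward never issues conflicting instructions and that the resulting outcome statistics reproduce the Born probabilities of $\tilde{\mathfrak{B}}_i$. Making the recursion watertight, for instance by an induction on $n$ that peels off the most significant qubit and mirrors the reflection generating the Gray code, is the step demanding the most care. By contrast, the product-state claim follows almost immediately once $S$ has been identified with the Gray-code map.
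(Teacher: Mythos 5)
Your proposal is correct and, at its core, follows the same route as the paper's: compute the action of $S$ on computational-basis labels, show that consecutive integers are mapped to strings at Hamming distance one, and invoke Lemma~\ref{lem:ham_1_sep}. The genuine difference lies in how the distance-one fact is established. The paper's Appendix~\ref{app:CNOT_separability} proves it from scratch (Lemma~\ref{lem:ham_1_dist}) by tracking the carry bits $c_\ell=\bigodot_{i<\ell}j_i$ of the addition $\vb{j}\mapsto\vb{j}+1$ and showing that the resulting Hamming distance telescopes to one; you instead recognize $S\ket{\vb{j}}=\ket{\vb{b}}$ with $b_i=j_i\oplus j_{i+1}$ as the binary-to-reflected-Gray-code map and cite its standard single-bit-change property, including the wraparound pair $2^n-1\mapsto 2^{n-1}$, $0\mapsto 0$ (which the paper handles via the overflow convention). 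Your packaging buys brevity and a conceptual explanation of why this particular CNOT chain works; the paper's buys self-containedness, since it re-derives the Gray-code property rather than citing it. On the LOCC half you are in fact more careful than the paper: you correctly flag that an orthonormal product basis need not be locally distinguishable (nonlocality without entanglement), so an explicit adaptive protocol must be exhibited. The paper's formal proof ends after the separability argument and delegates LOCC implementability to the conditional circuit of Fig.~\ref{fig:LOCC_Circuit} and Appendix~\ref{sec:LOCC_prot}, which realizes exactly the measure-and-feed-forward recursion you sketch (measure in $Z$, use the outcomes to decide which qubit receives the $H$ or $SH$ rotation). Your sketch leaves that recursion's consistency to an unperformed induction, but this is the same level of rigor the paper itself offers for that half of the claim.
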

The complete proof of this statement is given in the Appendix \ref{app:CNOT_separability}.

The chain of CNOT gates applied to the entangled bases $\mathfrak{B}_3$ and $\mathfrak{B}_4$ reduces the Hamming distances to 1 for each state in the bases, while the bases $\mathfrak{B}_0$, $\mathfrak{B}_1$ and $\mathfrak{B}_2$ remain disentangled after applying the chain of the CNOT gates. Thus, we arrive at a connected cycle in the hypercubic graph. This is depicted in Fig.~\ref{fig:hypercube_3q} for the 3-qubit case. In other words, the graph $\mathcal{G}^{sep}$ associated with the 5 disentangled bases can be obtained from the graph $\mathcal{G}^{ent}$ by applying the chain of CNOT gates to all states $\ket{\vb{j}}$ associated with the vertices $v_{\vb{j}}$. That is, we obtain a $2^n$-cyclic graph, where the absence of crossed edges implies that none of the vectors belonging to the $n$-qubit bases is entangled.

Fig.~\ref{fig:LOCC_Circuit} displays a hybrid classical-quantum circuit that creates disentangled bases $\tilde{\mathfrak{B}}_i$ with $i=1,2,3,4$. The circuit consists of Hadamard gates $H=|+\rangle\langle0|+|-\rangle\langle1|$ and phase gates $S=|0\rangle\langle0|+i|1\rangle\langle1|$ that act locally on the collection of qubits $q_k$ with $k=0,\dots,n$, followed by a measurement of the computational basis. The circuit also considers 4 bits $b_C$, $b_S$, $b_M$, $b_H$. The bit $b_M$ is initialized as $b_M=0$ and progressively overwritten with the measurement results on the canonical basis. The bit $b_H$ is initialized as $b_H=0$ and stores whether or not a Hadamard gate has been applied. If the quantum gate has been applied, bit $b_H$ is set to 1 but not reverted. The initial settings of bits $b_C$, $b_S$ control the unentangled basis to be implemented. Thereafter, the action of the quantum gates $H$ and $S$ is controlled by the value of bits $b_C$, $b_S$. The classic OR gate is applied to bits $b_M$ and $b_H$, and its result is used to control the action of quantum gates $H$ and $S$. The hybrid classical-quantum circuit in Fig.~\ref{fig:LOCC_Circuit} is based on an algorithm that implements an alternative analytic construction of the disentangled bases. See Appendix \ref{sec:LOCC_prot} for a detailed explanation of this method.

\section{\texorpdfstring{$2n+1$}{2n + 1} Tensor product bases}
\label{sec:2n+1_basis}

In this section, we derive a set of $2n+1$ informationally complete and fully separable bases for $n$-qubit pure states, arising from the polarization identity (\ref{pol_identity}). As an improvement to the five bases introduced in Section \ref{sec:5bases}, these bases do not require classical communication between the qubits to be implemented. However, the price to pay is that the number of bases is not fixed but scales linearly as $2n+1$ with the number of qubits $n$. However, we demonstrate that this number of bases is minimal, that is, fewer than $2n+1$ separable bases arising from the polarization identity are not informationally complete for pure states.

To derive these results, we first note that all the states belonging to the measurement bases have the form
\begin{equation}\label{states_pol}
|\psi\rangle=|\vb{j}\rangle+(i)^\ell|\vb{j}'\rangle
\end{equation}
according to the second line in (\ref{pol_identity}).
The separability of these states is captured directly by the Hamming distance between the underlying binary numbers.
\begin{lem}\label{lem:ham_1_sep}
    A pure quantum state of the form $$\ket{\psi} = a\ket{\vb{j}} + b\ket{\vb{j'}}$$
    is separable if and only if $\vb{j}$ and $\vb{j}'$ have Hamming distance one, i.e., $d_H(\vb{j},\vb{j}') \equiv \sum_{i=0}^n \abs{j_i - j'_i} \leq 1$.
\end{lem}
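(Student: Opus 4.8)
The plan is to prove both implications by a direct analysis of the binary strings, treating the genuine two-term case $a,b\neq 0$ and $\vb{j}\neq\vb{j'}$ (if one amplitude vanishes or $\vb{j}=\vb{j'}$ the state reduces to a single computational basis vector, which is trivially a product state, so these cases are handled separately and do not affect the equivalence). Throughout I use that, for a pure $n$-qubit state, \emph{separable} means \emph{fully product}, i.e.\ $\ket{\psi}=\bigotimes_i\ket{\chi_i}$, together with the elementary fact that two computational basis vectors are orthogonal exactly when their bit strings differ in at least one position.

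For the ``if'' direction, suppose $d_H(\vb{j},\vb{j}')=1$, so the strings agree everywhere except at a single index $m$, where $j_m\neq j'_m$. Then I can factor every qubit other than $m$ out of both terms, writing $\ket{\psi}=\bigl(\bigotimes_{i\neq m}\ket{j_i}\bigr)\otimes\bigl(a\ket{j_m}+b\ket{j'_m}\bigr)$, which is manifestly a product state. Hence $\ket{\psi}$ is separable, and the same factorization trivially covers $d_H=0$.

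For the ``only if'' direction I argue by contraposition: assuming $d_H(\vb{j},\vb{j}')\geq 2$, I exhibit a single bipartite cut across which $\ket{\psi}$ has Schmidt rank $2$, which already rules out full separability. Concretely, fix two indices $p\neq q$ at which the strings differ and split the qubits into qubit $p$ versus the remaining $n-1$ qubits. Writing $\ket{u}$ and $\ket{v}$ for the computational basis vectors of the rest induced by $\vb{j}$ and $\vb{j}'$, the state becomes $\ket{\psi}=a\ket{j_p}\otimes\ket{u}+b\ket{j'_p}\otimes\ket{v}$. The first-register vectors $\ket{j_p},\ket{j'_p}$ are orthogonal because $j_p\neq j'_p$, and the second-register vectors $\ket{u},\ket{v}$ are orthogonal because $\vb{j}$ and $\vb{j}'$ still differ at index $q$, which lives in the second register. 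Thus the two terms have mutually orthogonal factors on each side and nonzero coefficients (phases of $a,b$ being absorbable into the factors), so the Schmidt rank across the cut is $2$ and $\ket{\psi}$ is entangled, completing the contrapositive.

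The argument is elementary and I expect no serious obstacle; the only points requiring care are the bookkeeping of the degenerate cases (vanishing amplitude or coincident strings) and the logical reduction of full separability to a single-cut Schmidt-rank criterion—specifically, recalling that failure of the product form across even one bipartition is enough to conclude non-separability, so I need only produce one ``bad'' cut rather than analyze all of them.
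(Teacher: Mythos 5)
Your proof is correct and follows essentially the same route as the paper's: both factor out the qubits where $\vb{j}$ and $\vb{j}'$ agree and then observe that, on the differing indices, the two branches have mutually orthogonal factors across a cut, so the state has Schmidt rank $2$ there unless only one index differs. The only cosmetic differences are that you argue the converse by exhibiting one explicit bad cut (where the paper notes the state is in Schmidt form for \emph{every} bipartition of the differing set) and that you spell out the degenerate cases ($a b = 0$ or $\vb{j}=\vb{j}'$) slightly more carefully.
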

\begin{proof}
    We take $d_H(\vb{j},\vb{j}')>0$, as the case of $\vb{j} = \vb{j}'$ is trivial. Let us define the subset $I = \qty{i: j_i = j'_i}$ of identical indices, that is, indices where both binary strings $\vb{j}$ and $\vb{j}'$ have the same bit-value, and its complement $\overline{I}$. Thus, the state $\ket{\psi}$ can be written as a product between subspaces corresponding to both subsets, $\ket{\psi} = \ket{\psi}_I \otimes \ket{\psi}_{\overline{I}}$, where $I$ could be the empty set, for example, for a GHZ-like state. The state $\ket{\psi}_I = \bigotimes_{i\in I} \ket{j_i}$ is fully separable. By definition, we have $\ip{\vb{j}}{\vb{j}'} = \delta_{\vb{jj}'}$. Since an arbitrary phase of $b$ can be canceled by local operation $e^{i\lambda Z}$ acting on any of the qubits, we may take $a, b \in\mathbb{R}^+$ without loss of generality. This ensures that $\ket{\psi}_{\overline{I}}$ is written in its Schmidt form for any bipartition of $\overline{I}$. Thus, $\ket{\psi}_{\overline{I}}$ is fully separable if and only if $\abs{\overline{I}} = 1$, which is equivalent to $d_H(\vb{j},\vb{j}') = 1$.
\end{proof}

The above lemma implies that the graph induced by these measurements is the hypercube~\cite{harary1969graph}. Here, each set of edges composed of $2^n$ parallel edges in the hypercube defines an orthonormal basis. Therefore, we have a set made up of $2n$ fully separable measurement bases. We also include the computational basis within our set, as this allows us to mitigate error propagation in the estimation process, as we show in the Appendix \ref{app:selection_procedure}. Thus, our second estimation method requires $2n+1$ bases.

Additionally, it is straightforward to show that there is a unique way to define sets of orthonormal bases for these vectors, and any removal of a basis implies a disconnected graph $\mathcal{G}^{sep}$, since each basis represents $2^n$ parallel vertices in the $n$-dimensional hypercube; see Fig. \ref{fig:hypercube_3q}. Thus, the informationally complete set of fully separable bases arising from the polarization identity is unique.

For the case of a 2-qubit system, these bases coincide with the 5 disentangled bases defined in Section \ref{sec:5bases}. For 3-qubits, the 7 bases are explicitly given in the Appendix \ref{SM:7bases}.  

For an $n$-qubit system, there is a simple way to implement the $2n+1$ bases in a quantum computer, which involves a single-qubit gate applied to a single qubit per circuit. The first set of $n$ circuits requires a single-qubit Hadamard gate $H$ applied to the $k$-th qubit, $k=1,\dots,n$. The second set of $n$ circuits applies the single-qubit gate $\widetilde{H}=SH$ instead of $H$ on the $k$-th qubit, where
\begin{equation}\label{eq:operations_definition}
        H = \frac{1}{\sqrt{2}}\mqty(1 & 1 \\ 1 & -1)\mbox{ and } S = \mqty(1 & 0 \\ 0 & i).
    \end{equation}
The quantum circuits for generating separable the measurement bases in the 3-qubit case are depicted in Fig.~\ref{fig:bases_3qubits}.

\begin{figure}
                \centering
                \begin{subfigure}{0.5\textwidth}
                    \centering                    \includegraphics[width=0.22\textwidth]{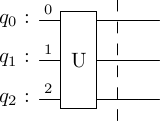}
                    \caption{Computational Basis.}
                    \label{CB}
                \end{subfigure}\\
                \begin{subfigure}{0.15\textwidth}
                    \includegraphics[width=\textwidth]{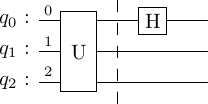}
                    \caption{$H_0.$}
                    \label{H_0}
                \end{subfigure}
                \begin{subfigure}{0.15\textwidth}
                    \includegraphics[width=\textwidth]{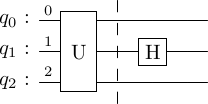}
                    \caption{$H_1.$}
                    \label{H_1}
                \end{subfigure}
                \begin{subfigure}{0.15\textwidth}
                    \includegraphics[width=\textwidth]{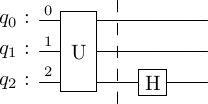}
                    \caption{$H_2.$}
                    \label{H_2}
                \end{subfigure}\\
                \begin{subfigure}{0.15\textwidth}
                    \includegraphics[width=\textwidth]{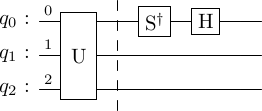}
                    \caption{$SH_0.$}
                    \label{SH_0}
                \end{subfigure}                
                \begin{subfigure}{0.15\textwidth}
                    \includegraphics[width=\textwidth]{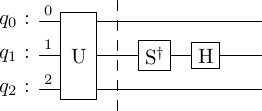}
                    \caption{$SH_1.$}
                    \label{SH_1}
                \end{subfigure}                
                \begin{subfigure}{0.15\textwidth}
                    \includegraphics[width=\textwidth]{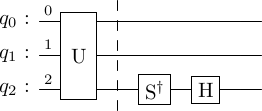}
                    \caption{$SH_2.$}
                    \label{SH_2}
                \end{subfigure}
                 
                \caption{\justifying Quantum circuits for the generation of the 7  separable measurement basis for a $3$-qubit system.}
                \label{fig:bases_3qubits}
\end{figure}
\begin{obs}
The weights of a vertex in $\mathcal{G}$, that is, ${w_{\vb{j}}=|\langle\psi|\vb{j}\rangle|^2}$, depend on the quantum state $|\psi\rangle$ to be estimated. A vanishing weight implies the removal of the associated vertex from the graph, as it does not play any role in the estimation process. 
\end{obs}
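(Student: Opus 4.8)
The plan is to dispatch the two assertions of the observation in turn. The first—that $w_{\vb{j}}$ is fixed by the target state—I would settle directly from the expansion (\ref{eq:state}): orthonormality of the computational basis gives $\langle\vb{j}|\psi\rangle = a_{\vb{j}}$, so that $w_{\vb{j}} = |\langle\psi|\vb{j}\rangle|^2 = |a_{\vb{j}}|^2$. This exhibits the dependence on $\ket{\psi}$ explicitly and identifies the weight with the squared amplitude already accessible from the computational-basis statistics $\mathfrak{B}_0$.

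For the second assertion—that a vanishing weight makes the vertex $v_{\vb{j}}$ inert—I would argue operationally. If $w_{\vb{j}} = |a_{\vb{j}}|^2 = 0$ then $a_{\vb{j}} = 0$, so this term drops out of (\ref{eq:state}) and its phase $\phi_{\vb{j}}$ is undefined and never needs to be recovered. It then remains only to show that the edges touching $v_{\vb{j}}$ carry no information. Inserting $a_{\vb{j}} = 0$ into any polarization identity (\ref{pol_identity}) incident to $v_{\vb{j}}$ makes the left-hand side $4a_{\vb{j}}a_{\vb{j}'}^{*}$ vanish, while the right-hand side collapses to $|a_{\vb{j}'}|^2\sum_{\ell} i^{\ell}=0$ (using $|i^{\ell}|=1$ and $\sum_{\ell=0}^{3} i^{\ell}=0$); the equation becomes a tautology that neither fixes nor transmits a phase. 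Hence such an edge is a dead end for the recursive phase propagation, and excising $v_{\vb{j}}$ together with its incident edges leaves the induced subgraph on the surviving amplitudes—and its unique recursive solution—untouched.

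The only point demanding care, and the one I would flag as the main (if modest) obstacle, is consistency with the recursive reconstruction: since phases are propagated from a reference vertex along information-carrying edges, I must verify that removing $v_{\vb{j}}$ cannot sever any such path between nonzero-weight vertices. This holds because an edge ending at a zero-weight vertex was already incapable of propagating a phase, so the deletion affects only inert structure and preserves connectivity among the relevant vertices. I would close by noting that in the experimental setting the exact condition $w_{\vb{j}}=0$ is replaced by a small threshold, with the logic otherwise unchanged.
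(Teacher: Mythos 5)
Your core argument is sound, and it is worth noting that the paper itself states this observation without proof, so there is no "official" proof to compare against; your vacuity computation supplies precisely the justification the paper leaves implicit. The identification $w_{\vb{j}}=|a_{\vb{j}}|^2$ is immediate from orthonormality, and your check that setting $a_{\vb{j}}=0$ collapses every incident polarization identity to the tautology $0=0$ (using $\sum_{\ell=0}^{3} i^{\ell}=0$) is exactly the right reason why such a vertex and its incident edges contribute nothing to the reconstruction.

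The one point to correct is your closing claim that the deletion "preserves connectivity among the relevant vertices." Graph-theoretic connectivity is in general \emph{not} preserved: in the $2^n$-cycle graph of the five-basis method, two vanishing non-adjacent weights leave the surviving nonzero-weight vertices in two disconnected components, and the paper states immediately after this observation that in exactly this situation the methods fail, providing an infinite number of solutions. What your argument actually establishes --- and all that the observation needs --- is that removal preserves the \emph{information content}: every severed edge was already incapable of transmitting a phase, so the solution set of the system of polarization identities is unchanged by the excision. Whether that solution set is a singleton (i.e., whether your phrase "its unique recursive solution" applies) depends on the surviving subgraph being connected, which is a property of the state being estimated, not something the deletion argument can guarantee. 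Phrased as preservation of the solution set rather than of connectivity, your proof is complete and consistent with the paper's subsequent discussion of the failure mode.
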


The two estimation methods presented here share a common feature with the 5-bases method, that they fail when two non-consecutive coefficients, considering the lexicographical order of indices, of the state $|\psi\rangle$ vanish, or equivalently, when the number of vanishing weights produces a disconnected graph. In such a case, the methods provide an infinite number of solutions. The states affected by this characteristic are a set of null measure, and consequently, their occurrence is unlikely.



The hypercubic graph contains $2^{-n}\prod_{i=1}^n (2i)^{\binom{n}{i}}$ spanning trees~\cite{Stanley_Fomin_1999}, each of them admitting a number of recursive methods to estimate a pure quantum state. In an error-free implementation, all these methods lead to the same estimate. However, when considering realistic errors, each path produces a different estimate of the unknown pure state. As a consequence, there is a large freedom to optimize with respect to errors. Appendix \ref{app:errors} contains a detailed explanation of error propagation, and Appendix \ref{app:analityc_sol} the analytic solution to the system of equations that estimate the state, regardless of the chosen subgraph.

As a final comment of the section, let us note that the statistics collected from the five, or the $2n+1$ bases, is informative enough to determine whether the prepared state is rank 1 or not. In short, polarization identity (\ref{pol_identity}) connecting vertices $v_{\vb{j}}$ and $v_{\vb{j}}'$ allows us to determine whether the $\vb{j}$-th and $\vb{j}'$-th entries of the density matrix are linearly dependent. In this way, a connected graph allows us to certify whether the prepared state is pure or not. See details in Appendix \ref{SM:purity}.

\section{Demonstration on IBM quantum processing units}\label{sec:IBM}

    To evaluate the performance of our estimation methods, we performed simulations on IBM quantum processors ibm{\textunderscore}kyoto, ibm{\textunderscore}kyiv, and ibm{\textunderscore}sherbrooke, each one of them is an IBM Quantum Eagle r3 processor. This has 127 qubits connected through a heavy-hexagonal architecture~\cite{abughanem2025ibm}. As a consequence, every qubit is physically connected to two or three neighboring qubits. 

    \medskip
    \noindent \textit{State preparation:} To test both estimation methods, we generate genuinely entangled $n$-qubit graph quantum states. These are created using constant depth-three circuits comprising $2n-1$ bipartite entangling gates for an $n$-qubit system. Any smaller number of such gates fails to generate genuine entanglement in a depth-three circuit. A similar constant-depth architecture was previously considered in~\cite{cabello2011optimal}. 
    Our state preparation procedure uses $\mathbb{I}$,$\sqrt{X}$ and ECR gates, which are native gates in the Eagle r3 QPU. The prepared states are given by:


    \begin{widetext}
    \begin{equation}
         \ket{\psi_2}= \left(\sqrt{X}\otimes \mathbb{I}_2\right)\left(ECR_{0,1}\right)\nonumber(\sqrt{X}\otimes \sqrt{X})\ket{00}
    \end{equation}
    for two qubit systems, 
    \begin{equation}\label{eq:state_n_even}
        \ket{\psi_n}= \left(\sqrt{X}\otimes\ \bigotimes_{k=0}^{\lfloor \frac{n}{2}\rfloor-2}ECR_{2k+1,2k+2}\otimes \mathbb{I}_2\right)\left(\bigotimes_{j=0}^{\lfloor \frac{n}{2}\rfloor-1}ECR_{2j,2j+1}\right)\nonumber(\sqrt{X}^{\otimes n})\ket{0}^{\otimes n}
    \end{equation}
    for even $n>2$, and
    \begin{equation}        
    \label{eq:state_n_odd}
        \ket{\psi_n}= \left(\sqrt{X} \otimes\bigotimes_{k=0}^{\lfloor \frac{n}{2}\rfloor-1}ECR_{2k+1,2k+2}\right)\left(\bigotimes_{j=0}^{\lfloor \frac{n}{2}\rfloor-1}ECR_{2j,2j+1} \otimes \sqrt{X}\right)\nonumber(\sqrt{X}^{\otimes n})\ket{0}^{\otimes n},
    \end{equation}
    for $n>1$ odd, where $j\in \{0,n-1\}$.
    \end{widetext}
    
     

    Figs.~\ref{fig:3qb_system} and \ref{fig:4qb_system} show the quantum circuits for $3$ and $4$ qubits, respectively. Note that the generated states $\ket{\psi_n}$ have constant amplitudes $\abs{\ip{\vb{j}}{\psi_n}}^2$ for all $\vb{j} = 0,\hdots,2^n-1$, for all $n\in\mathbb{N}$.
    
    Due to the restricted qubit connectivity of the Eagle r3 QPU, finding an efficient circuit implementation in practice is a non-trivial problem. For this reason, we developed a new optimization algorithm that selects the best subset of physical qubits; see Appendix \ref{app:selection_procedure}. This dedicated algorithm outperforms Qiskit’s built-in transpiler at optimization level 3.

\begin{figure}
    \centering
        \begin{subfigure}{0.23\textwidth}
            \includegraphics[width=\textwidth]{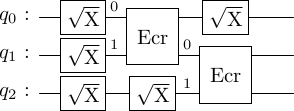}
            \caption{$3$-qubit system.}
            \label{fig:3qb_system}
        \end{subfigure}
        \hspace{0.2cm}
         \begin{subfigure}{0.23\textwidth}
            \includegraphics[width=\textwidth]{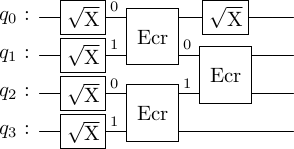}
            \caption{$4$-qubit system.}
            \label{fig:4qb_system}
        \end{subfigure}
    \caption{\justifying Efficient preparation of a genuinely entangled state by using native gates from IBM Eagle r3 QPU.
    }
    \label{fig:states_preparation}
\end{figure}       

    \medskip
    \noindent\textit{Estimation using $2n+1$ tensor product bases:} We have implemented the estimation of a $n$-qubit graph state $\ket{\psi_n}$ with $n=3,\dots,12$. The experiment is carried out with a total of $N_T=N(2n+1)$ independently and identically prepared copies of the graph state, where $N$ is the number of copies measured with each basis, or the number of shots per measurement basis in the quantum computing language. We considered $N=2\times10^4$ shots for $n=3,\dots,9$, $N=4\times10^4$ shots for $n=10,11$, and $N=8\times10^4$ shots for $n=12$. The fidelity $F=|\langle\tilde\psi_n\ket{\psi_n}|^2$ between the estimated state $\ket*{\tilde{\psi}_n}$ and the target state $\ket{\psi_n}$ is shown in Fig.~\ref{fig:results_2n} as a function of the number $n$ of qubits for three IBM quantum computers. Each curve shows a decrease in fidelity with an increasing number of qubits, with the single exception of the Kyev quantum computer at $n=11$, where there is an increase in fidelity. This general behavior is expected since the total number of real parameters to be estimated increases as $2^{n+1}-2$ while the number $N$ of shots per measurement basis increases in the experiment by a factor of 4 at most. Therefore, we have an exponential increase in the number of parameters but a linear increase in the total number of shots. However, fidelity can be increased to a value close to 1, provided a sufficiently large total number of shots can be generated. This is shown in Fig. \ref{fig:results_12_qb_ensemble_size}, where fidelity $F$ as a function of the total number of shots $N_T$ is illustrated for the case of $n=12$. As can be seen in this figure, a fidelity of 0.937 is achieved.

    \begin{figure}[htb]
        \centering        
        \includegraphics[width=.9\columnwidth]{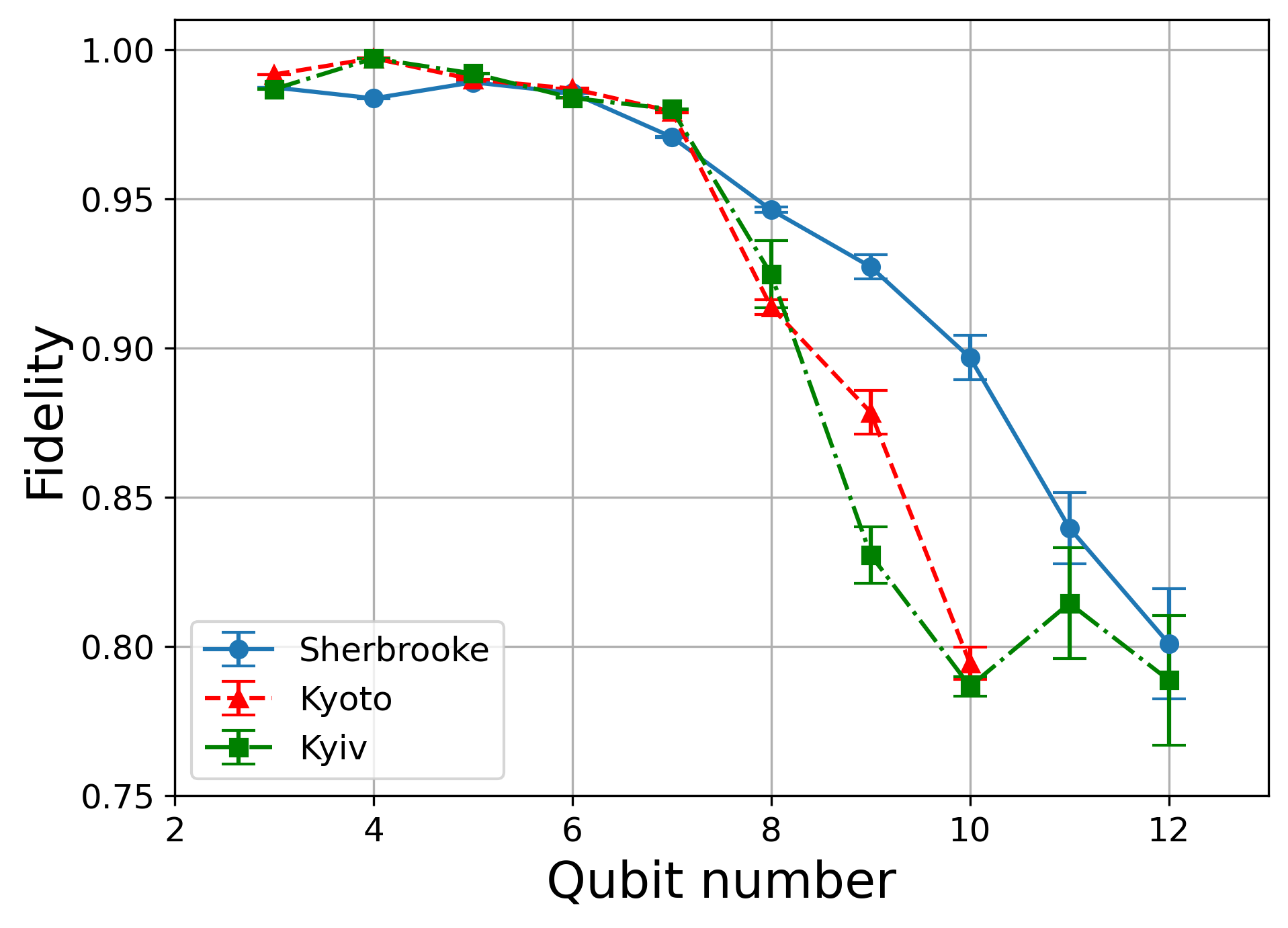}
        \caption{\justifying (Color online) Fidelity reconstruction vs. number of qubits for three free accessible IBM quantum computers. For 11 qubits there is an apparently unexpected behaviour of Kyiv experiments perhaps due to errors, as this figure involves single multishot experiments with fixed qubit subsets and hardware callibration, and not an average over a large number of realizations. As such, results may deviate from expected monotonic behavior due to setup specifics, which could not have been accounted for. Error bars were estimated by assuming a multinomial statistical distribution of data.
        }
        \label{fig:results_2n}
    \end{figure} 
    \begin{figure}[htb]
        \centering
        \includegraphics[width=.9\columnwidth]{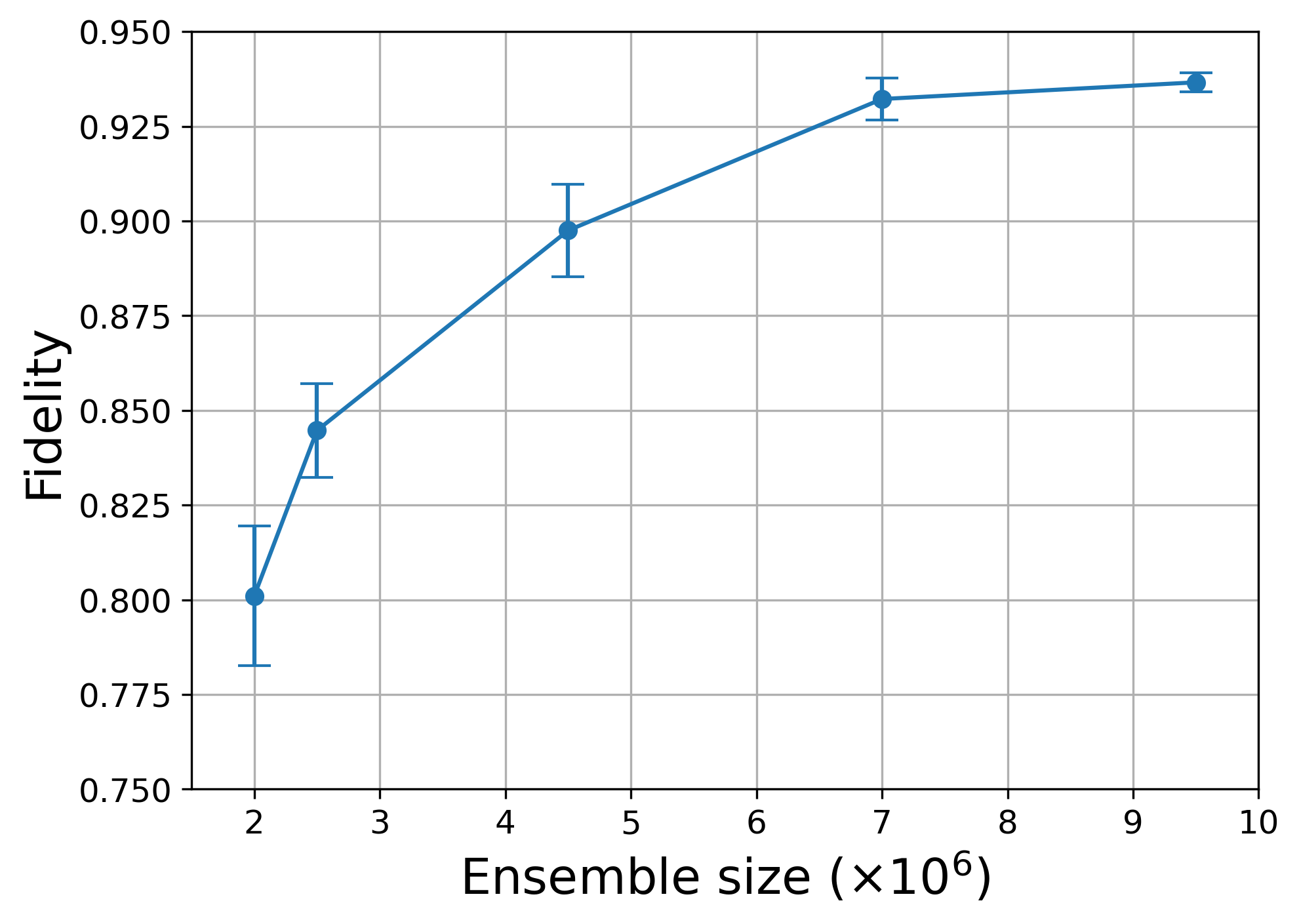}
        \caption{\justifying Fidelity vs. total ensemble size obtained with IBM Sherbrooke for the case of 12 qubits. 
        }        \label{fig:results_12_qb_ensemble_size}
    \end{figure}

    \medskip
        
    \noindent\textit{Estimation using five disentangled bases:} Measurement bases $\tilde{\mathfrak{B}}_k$ with $k=0,\dots,4$ only contain fully separable states. Unlike the bases $\tilde{\mathfrak{B}}_0$, $\tilde{\mathfrak{B}}_1$, and $\tilde{\mathfrak{B}}_2$, which are fully separable, the measurement bases $\tilde{\mathfrak{B}}_3$ and $\tilde{\mathfrak{B}}_4$ are generated using classical communication as a resource. Figure~\ref{fig:LOCC_Circuit} shows the quantum circuits required to generate the five disentangled measurement bases.
    
    Estimation of genuinely entangled graph states using disentangled bases is carried out with $N=10^5$ shots per measurement basis, for $n=3$ to $n=7$ qubits. As before, we calculate the fidelity $F=|\langle\tilde\psi_n\ket{\psi_n}|^2$ between the estimated state $|\tilde{\psi}_n\rangle$ and the ideal state $\ket{\psi}$. Fig. \ref{fig:protocols_comparison} compares the performance of $2n+1$ and 5 disentangled bases. The first row considers $2\times 10^4$ shots per basis. In this case, the $2n+1$ bases method outperforms the five bases in almost all the cases studied. However, when considering the same number $N_T=10^5$ of total shots for each estimation method, the five disentangled bases exhibit a clear advantage over the $2n+1$ bases method, which is shown in the second row of Fig. \ref{fig:protocols_comparison}. In particular, the $2n+1$ bases method achieves a fidelity close to 0.8 for $n=6$ and 7 qubits, while the method of the five disentangled bases achieves a fidelity close to 0.95, which represents a very large gain. This is true for the results obtained on all three IBM QPUs.

    The above results indicate that for a fixed total number of shots, it is advisable to prefer the method of the five disentangled bases instead of the $2n+1$ bases method. This is because the method of the five disentangled bases allocates more shots per measurement basis, thus reducing the measurement errors per basis and consequently increasing the fidelity of the estimated state. As long as finite statistics is the only source of noise, the method of the five disentangled bases should deliver a higher estimation fidelity than the $2n+1$ bases method. However, these methods rely on different resources, classical communication and single-qubit gates, respectively, and therefore it is not clear which achieves better performance for a large number of qubits.

\begin{figure*}
    \centering
    \includegraphics[width=\textwidth]{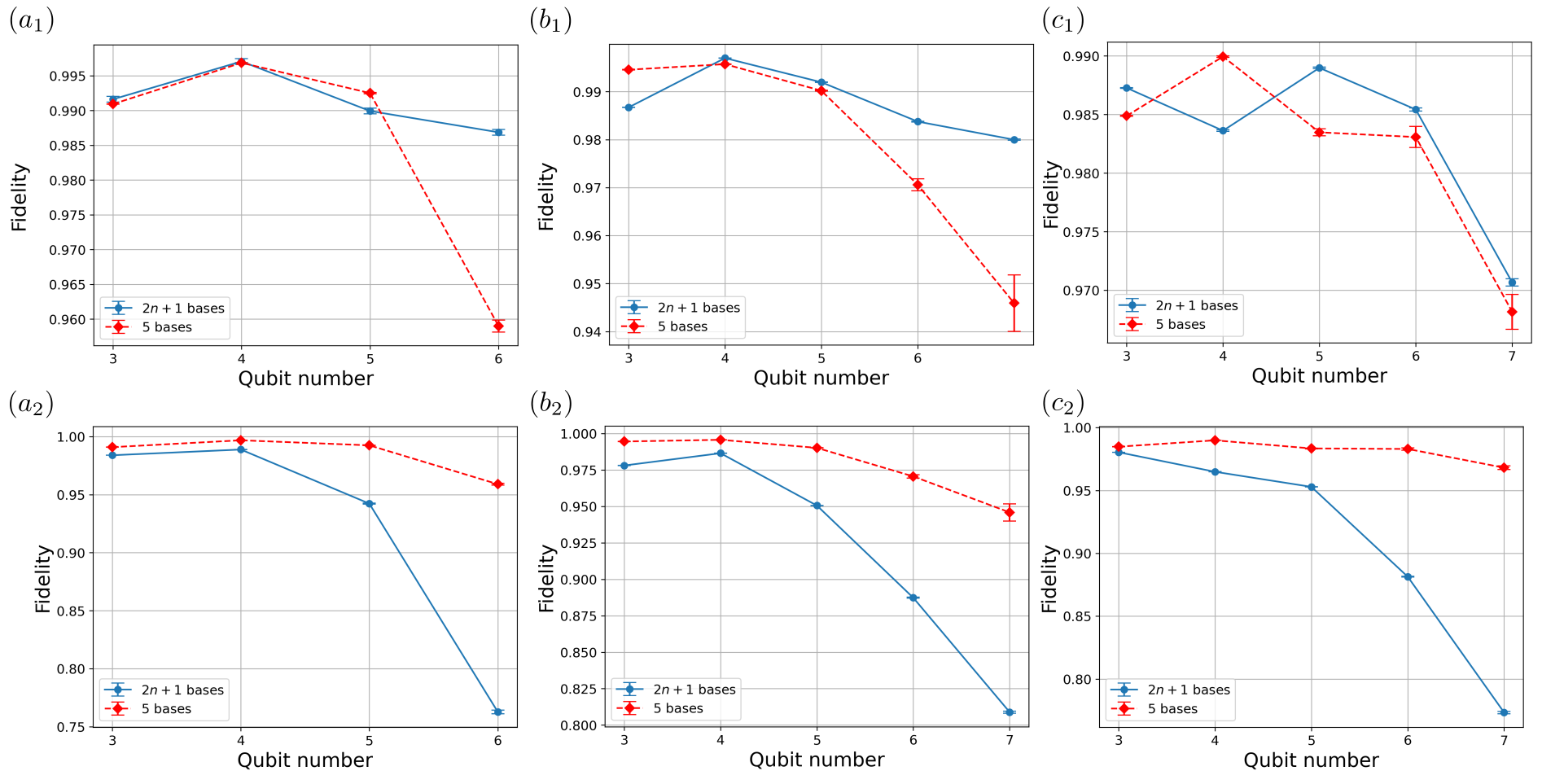
    }
    \caption{\justifying (Color online) Comparison of fidelity achieved between the $2n+1$ (blue lines) and $5$ (red lines) bases methods vs. the qubit number of the system. On the first row of plots, we consider $2\times10^4$ shots per basis, whereas the second row considers a fixed number of total shots $10^5$, equally distributed among all the bases in each case. Each column represents experiments on a real quantum processor, namely, IBM Kyoto, Kyiv and Sherbrooke for subfigures $(a_1)$-$(a_2)$,   $(b_1)$-$(b_2)$ and $(c_1)$-$(c_2)$, respectively.}
    \label{fig:protocols_comparison}
\end{figure*}            

\section{Experiment with remote entangled trapped ions}
\label{sec:Oxford}

In order to validate our method, we analyze the data presented in~\cite{main2025multipartite}, where two remote ion-trap small quantum processors are entangled. In this experiment, two ion traps---denoted \emph{Alice} and \emph{Bob}---are separated by approximately \(2\,\mathrm{m}\), each loaded with one \({}^{88}\mathrm{Sr}^+\) ion and one \({}^{43}\mathrm{Ca}^+\) ion (see Fig.~\ref{fig:ions}). Qubits can be encoded on the electronic states of each of the atomic ions. Remote entanglement between the two separated \({}^{88}\mathrm{Sr}^+\) ions is established via single-photon emission and Bell-state analysis of the emitted photons~\cite{stephenson2020high}, producing the Bell state $|\Psi^+\rangle$ between the two strontium ions.  

\begin{figure}
    \centering
    \includegraphics[width=\columnwidth]{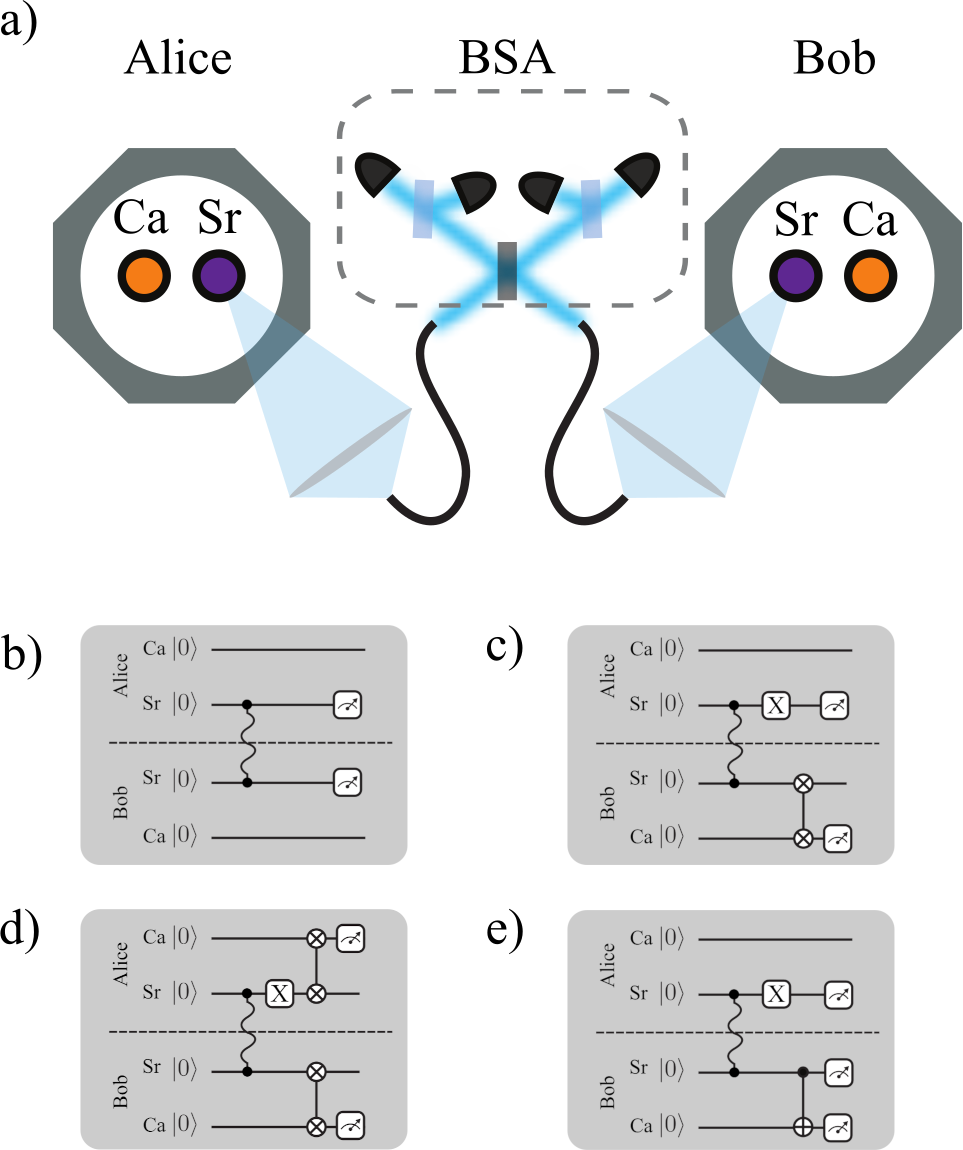}
    \caption{\justifying Experimental configuration and corresponding quantum circuits for generating remote entangled states between trapped ions. a) Experimental setup for generating remote entangled states between trapped ions. Two small processors (Alice and Bob), each containing a single $^{88}$Sr$^+$ ion, are entangled via interference of the photons they emit on a Bell-state analyser (BSA), which projects the ions into a Bell state. Local single- and two-qubit gates are then used to map entanglement between the Sr ions onto the co-trapped $^{43}$Ca$^+$ ions. b)–e) Quantum circuits implemented to generate the two- and three-qubit states listed in Table~\ref{tab:summary}. The wavy line denotes remote entanglement, while the measurement symbol indicates a single-qubit measurement in an arbitrary basis.
    }
    \label{fig:ions}
\end{figure}    

The local single- and two-qubit gates are then applied to create bipartite entangled states of \(\mathrm{Sr}_A\!-\!\mathrm{Ca}_B\), \(\mathrm{Ca}_A\!-\!\mathrm{Ca}_B\), and the GHZ state of three-qubits \(\mathrm{Sr}_A\!-\!\mathrm{Sr}_B\!-\!\mathrm{Ca}_B\), where the subscripts indicate the trap (\emph{Alice} or \emph{Bob}) in which each ion resides. The quantum circuits used to generate these states are shown in Figs.~\ref{fig:ions}b)-e). A detailed description of the experiment, including different sources of errors in the preparation of the states and measurements, can be found in ref.~\cite{main2025multipartite}.  

The fidelity of the generated states with respect to the ideal Bell or GHZ states can be estimated using partial- or full state tomography. Table \ref{tab:summary} reports the fidelity obtained using three different methods: Parity measurements (PM)\cite{monz201114}, standard Pauli tomography (PT) with maximum-likelihood estimation~\cite{vrehavcek2007diluted, stephenson2020high}, and the new $2n+1$-bases method introduced in this work. In the case of PT, each qubit is measured in each of the three Pauli bases, resulting in \(3^n\) distinct measurement settings for an \(n\)-qubit state. From the reconstructed density matrices using PT, the purities of the states were also calculated (Table \ref{tab:summary}). Using the $2n+1$-basis method presented in this work, we select a subset of the Pauli basis measurements described above to implement the optimized protocol and reconstruct a pure state compatible with the measurements. 
The subset of Pauli basis measurements used for the method introduced in this work are those such that all qubits except one are measured on the computational basis (The basis given by $\sigma_z$), and the remaining qubit is measured either in the Pauli $\sigma_x$ or $\sigma_y$ basis, these measurements are equivalent to the described in Section \ref{sec:2n+1_basis} up to a global phase.
The fidelity values obtained with this set are given in Table \ref{tab:summary}. The density matrices reconstructed using full tomography and using our optimized method are shown in the Appendix~\ref{SM:IonTrap}. Furthermore, Table~\ref{tab:summary} shows the overlap (fidelity) between the reconstructed density matrices using PT and the $2n+1$ bases tomography. Notably, even given the limited purities of the experimentally generated states, the fidelity estimates from the optimized method show good agreement with those from the standard, non-optimized tomography, which does not rely on the assumption of the states being pure.

\begin{table*}[t]
\centering
\renewcommand{\arraystretch}{1.3} 
\setlength{\tabcolsep}{3pt}       

\begin{tabular}{lllccccc}
\hline\hline
\multicolumn{2}{c}{Ions} & State & \multicolumn{3}{c}{Fidelity to ideal state} & Purity & Relative overlap \\
Alice & Bob & & PM & PT & 2$n$+1-basis method & &  \\
\hline
${}^{88}\mathrm{Sr}^+$ & ${}^{88}\mathrm{Sr}^+$ &
$\frac{1}{\sqrt{2}}\ket{01}+\ket{10}$ &
96.0(7) \% & 96.94(9) \% & 98.72(3) \%  & 94.1 \% & 95.8 \%  \\

${}^{88}\mathrm{Sr}^+$ & ${}^{43}\mathrm{Ca}^+$ &
$\frac{1}{\sqrt{2}}\ket{00}+\ket{11}$ &
95.1(8) \% & 94.1(6) \% & 96.61(5) \%  & 89.0 \% & 89.6 \%  \\

${}^{43}\mathrm{Ca}^+$ & ${}^{43}\mathrm{Ca}^+$ &
$\frac{1}{\sqrt{2}}\ket{00}+\ket{11}$ &
92(1) \% & 93.1(7) \% & 92.85(4) \%   & 87.0 \% & 91.5 \% \\

${}^{88}\mathrm{Sr}^+$ & ${}^{88}\mathrm{Sr}^+$, ${}^{43}\mathrm{Ca}^+$ &
$\frac{1}{\sqrt{2}}\ket{000}+\ket{111}$ &
94(1) \% & 93.1(7) \% & 94.96(7) \%  & 87.3 \% & 86.6 \%  \\

\hline\hline
\end{tabular}

\caption{\justifying \textbf{Summary of mixed-species trapped ion remotely entangled states characterised in this work.}
This table includes the observed entanglement fidelities with respect to the ideal states extracted using parity measurements (PM), Pauli tomography (PT) and the 2$n$+1-basis method introduce in this article.
The density matrices extracted using PT and the new 2$n$+1-basis method can be found in Appendix \ref{SM:IonTrap}. The purity of the experimental states is calculated from the density matrices estimated from PT. The relative overlap is calculated between the density matrices estimated using PT and the 2$n$+1 method.}
\label{tab:summary}
\end{table*}

\vspace{1cm}

\section{Conclusions}

We introduced two methods to estimate the global quantum state of an $n$ qubit quantum computer. The first one considers five measurements bases only, independently of the number of qubits. Three of such bases are tensor product, and the resting two bases do not have entangled states but require classical communication between qubits to be implemented. The second method is composed of $2n+1$ tensor product bases, thus not requiring classical communication. Each measurement stage is realized by a quantum circuit that applies a single local gate to one qubit, followed by measurement in the computational basis. \textit{This configuration reaches the absolute limit of simplicity for a measurement stage, while being fully scalable and compatible with current quantum hardware.} Hence, both estimation methods do not require the use of entangling quantum gates. The absence of such gates in the measurement stage minimizes error accumulation, allowing for higher fidelity estimation across a larger number of qubits. Scalability, simplicity, and higher fidelity are three notable advantages of our methods over existing estimation methods. An additional advantage is that the estimates are obtained analytically from the measurement results; that is, there is no need to solve a system of equations or to use statistical inference for data post-processing. Thus, the classical computational cost of our methods is reduced to a minimum, thereby preserving scalability. Furthermore, an important feature is the possibility of certifying that the prepared state is pure or nearly pure; see Appendix \ref{SM:purity}.    

The performance of both methods was tested on freely accessible IBM quantum computers using genuinely entangled quantum states. The $2n+1$-basis method achieved a high fidelity of $f \approx 0.93$ for 12 qubits, exhibiting unprecedented results on quantum computers, whereas the five-basis disentangled method yielded $f \approx 0.94$ for seven qubits. We also compared the two methods on the same quantum processor. First, we compared the results of both methods with a fixed number of experiments for each basis, demonstrating that in this case the $2n+1$ bases method performs better. Subsequently, the measurements were repeated, keeping the total size of the ensemble fixed; in this context, the 5 bases method demonstrated superior performance. The advantage of the 5 bases method for a fixed ensemble size comes from the fact that having fewer outcomes then fewer repetitions are required to accurately estimate the probabilities distribution for the 5 bases method. On the other hand, if we are not limited by the size of the ensemble, it is better to implement the $2n+1$ bases method, whose advantage comes from the fact that by having a larger number of outcomes, the method is more robust to errors. We have also demonstrated our methods in two remote 2-qubit ion traps by reconstructing a distributed entangled state, showing the possibility of application in distributed quantum computing. 

The proposed protocols offer a scalable, efficient and reliable method for benchmarking quantum states, suitable for current quantum hardware. Depending on the system characteristics, one approach may outperform the other or even work best in combination. For example, an initial characterization can be performed with $2n+1$ product measurements, and once the optimal estimation path is identified, the $5$ bases protocol can be applied to enhance statistics while reducing sample costs. 



\acknowledgments

DG acknowledges financial support from FONDECYT Regular Grant No. 1230586 (Chile). VGA and KDO are enrolled in the PhD Program Doctorado en Física, mención Física-Matemática at the Universidad de Antofagasta, Chile. VGA further acknowledges support from the ANID doctoral fellowship No. 21221008. JCz is supported by the start-up grant of the Nanyang Assistant Professorship at Nanyang Technological University, Singapore, awarded to Nelly Ng. GA acknowledges support from Wolfson College, Oxford, and Cisco. DM is currently employed by Oxford Ionics. AD is supported by the ANID Millennium Science Initiative Program ICN17$_-$012. We acknowledge the use of IBM Quantum services for this work. The views expressed are those of the authors, and do not reflect the official policy or position of IBM or the IBM Quantum team.

\appendix

\section{Proof of Proposition \ref{prop:cnot_prop}}
\label{app:CNOT_separability}

In this appendix, we present a complete proof of the following proposition:

\cnotprop*

We begin by demonstrating the following lemma:

\begin{lem} \label{lem:ham_1_dist}
    Consider a pair of binary strings $\vb{j}',\,\vb{k}'$, defined by $\ket{\vb{j}'} = S\ket{\vb{j}},\,\ket{\vb{k}'} = S\ket{\vb{j}}$ with $S$ given by \eqref{cnots}. The Hamming distance between $\vb{j}'$ and $\vb{k}'$ is given by
    \begin{equation} \label{eq:lem_ham_1_statement}
        d_H(\vb{j}',\vb{k}') = 1.
    \end{equation}
\end{lem}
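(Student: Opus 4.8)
The plan is to recognize the operator $S$ as the standard binary-to-Gray-code encoder and then invoke the defining property of the reflected binary Gray code, namely that integers differing by one are mapped to strings at Hamming distance one. Throughout I read the statement with $\vb{k} = \vb{j}+1 \bmod 2^n$, so that $\ket{\vb{k}'} = S\ket{\vb{k}}$ is the image of the successor of $\vb{j}$; this is exactly the case needed to disentangle the consecutive-index pairs $\ket{\vb{j}}+i^\ell\ket{\vb{j}+1}$ appearing in $\mathfrak{B}_3$ and $\mathfrak{B}_4$, since combined with Lemma~\ref{lem:ham_1_sep} it yields separability.

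First I would compute the action of $S$ on an arbitrary computational basis state. Applying the gates in the order $\CNOT_{1,0},\CNOT_{2,1},\dots,\CNOT_{n-1,n-2}$, each $\CNOT_{i+1,i}$ writes $j_i \mapsto j_i \oplus j_{i+1}$ into target qubit $i$, and because successive gates act on disjoint control/target pairs the control qubits always carry their original values. Hence
\begin{equation}
    S\ket{j_0 j_1 \cdots j_{n-1}} = \ket{j_0\oplus j_1,\; j_1\oplus j_2,\;\dots,\; j_{n-2}\oplus j_{n-1},\; j_{n-1}},
\end{equation}
so that $j'_i = j_i \oplus j_{i+1}$ for $i<n-1$ and $j'_{n-1}=j_{n-1}$. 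This is precisely the reflected binary Gray-code map.

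Next I would analyse the bit pattern of the difference $d_i := j_i \oplus k_i$ between $\vb{j}$ and its successor $\vb{k}$. Letting $p$ be the index of the lowest-order zero bit of $\vb{j}$, carry propagation in the addition $\vb{j}+1$ flips exactly bits $0,\dots,p$, so that $d_i=1$ for $0\le i\le p$ and $d_i=0$ for $i>p$: a single block of ones followed by zeros. A one-line computation then gives $j'_i \oplus k'_i = d_i \oplus d_{i+1}$ for $i<n-1$ and $j'_{n-1}\oplus k'_{n-1} = d_{n-1}$. Since the sequence $(d_i)$ has exactly one $1\!\to\!0$ transition (at index $p$) and agrees with its neighbour everywhere else, precisely one of these difference bits is nonzero, giving $d_H(\vb{j}',\vb{k}')=1$.

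I do not expect a substantial obstacle here; the only care required is bookkeeping the CNOT order correctly and treating the boundary situations, which I would handle separately. When $p=n-1$ (i.e.\ $\vb{j}=2^{n-1}-1$) the unique difference sits in the untransformed top bit via the term $d_{n-1}=1$, and for the cyclic wraparound $\vb{j}=2^n-1$, $\vb{k}=0$ every $d_i=1$, so again only the top-bit term survives; both cases yield $d_H=1$, confirming that the Gray-code adjacency property persists at the edges.
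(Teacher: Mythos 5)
Your proposal is correct and follows essentially the same route as the paper's proof: both reduce $S$ to the binary-to-Gray-code map $j'_i = j_i \oplus j_{i+1}$ (with $j'_{n-1}=j_{n-1}$), observe that incrementing $\vb{j}$ flips precisely the contiguous block of low-order bits up to and including its lowest zero bit (your difference bits $d_i$ are exactly the paper's carry bits $c_i$), and conclude that the adjacent-XOR sequence $d_i \oplus d_{i+1}$ has a single nonzero entry. One small wording fix: consecutive gates in $S$ do not act on disjoint qubit pairs (qubit $m$ is the control of $\CNOT_{m,m-1}$ and the target of $\CNOT_{m+1,m}$); what actually makes your formula for $S$ correct is that each gate's control qubit has not yet been targeted by any earlier gate in the sequence.
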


\begin{proof}

Recall the natural mapping $\vb{j}\mapsto \sum_i j_i 2^i$, which will allow us to interpret $\vb{j}$ as both a binary string and the corresponding integer in the binary representation. We begin by explicitly stating the relation between the digits of $\vb{j}$ and its successor, $\vb{k} = \vb{j}+1$, using the standard carry-over concept,
\begin{equation}\label{addition}
    \begin{aligned}
        k_0 & = j_0 \oplus 1, & c_1 & = j_0, \\
        k_1 & = j_1 \oplus c_1, & c_2 & = j_1\odot c_1, \\
        \hdots \\
        k_{n-2} & = j_{n-2} \oplus c_{n-2}, & c_{n-1} & = j_{n-2}\odot c_{n-2}, \\
        k_{n-1} & = j_{n-1}\oplus c_{n-1},
    \end{aligned}
\end{equation}
where we used $\oplus$ and $\odot$ for addition and multiplication in $\mathbb{Z}_2$ for full clarity, 
 and overflow property $\vb{j} = 2^n-1\Rightarrow\vb{k} = 0$ is realized due to fixed number $n$ of digits.  We immediately resolve the recurrence for the carry-over by setting

\begin{equation} \label{eq:carryover_closedform}
    c_\ell = j_{\ell - 1}\odot c_{\ell - 1} =  \bigodot_{i=0}^{\ell-1} j_i
\end{equation}
which imposes, for consistency, $c_0 = 1$ and can be extended all the way to $\ell = n$. 
In turn, the Hamming distance is given explicitly as
\begin{equation}
    d_H(\vb{j}, \vb{k}) = \sum_{i=0}^{n-1} \abs{j_i - k_i} = \sum_{i=0}^{n-1} c_i
\end{equation}

From (\ref{eq:carryover_closedform}) we observe $c_\ell = 1$ if and only if all $j_i = 1$  for $i \leq \ell$. As a consequence, after simple algebraic manipulation we find that
\begin{equation}
    d_H(\vb{j}, \vb{k}) = \max \ell+1: \vb{k} \equiv 0\mod 2^\ell.
\end{equation}
This statement can be made explicit -- adding 1 to $\vb{j}$ ending in a chain of ones of length $\ell_*$ results in a carry-over cascade until $j_{\ell_*} = 0$, resulting in $\vb{k}$ having $\ell_*$ final digits equal to 0 and $k_{\ell_*} = 1$, making it divisible by $2^{\ell}$ for all $\ell \leq \ell_*$; it is worth noting, that the reasoning holds also
for odd $\vb{j}$, ie. if $\ell_* = 0$.

Now, let us limit ourselves to first two digits, setting $n = 2$ 
and consider digits of $\vb{j}'$ and $\vb{k}'$ as defined in the statement of the lemma, That is,
\begin{equation}
    \begin{aligned}
        j'_0 & = j_0 \oplus j_1, & k'_0 & = j_0 \oplus 1 \oplus (j_1 \oplus j_0) \\
        &&&= j_1\oplus 1, \\
        j'_1 & = j_1, & k'_1 & = j_1 \oplus j_0 
    \end{aligned}
\end{equation}
We can evaluate Hamming distance as
\begin{equation}
    d_H(\vb{j}',\vb{k}') = (j_0 \oplus 1) + j_0 = 1
\end{equation}
which is in line with the proposition.

Extending this to $S$
for arbitrary $n \geq 2$, we find
\begin{equation}
    \begin{aligned}
        j'_\ell & = j_\ell \oplus j_{\ell+1}, & k'_\ell & = j_\ell \oplus c_\ell \oplus j_{\ell+1} \oplus c_{\ell+1} \\
        &&& = j_\ell \oplus j_{\ell+1} \oplus c_\ell\odot(j_\ell\oplus 1), \\
        j'_{n-1} & = j_{n-1}, & k_{n-1} & = j_{n-1} \oplus c_{n-1}
    \end{aligned}
\end{equation}
with $\ell < n-1$.

From this, we see that Hamming distance is given by
\begin{equation} \label{eq:shift_ham_dist}
    d_H(\vb{j}',\vb{k}') = c_{n-1} + \sum_{\ell = 0}^{n-2} c_\ell\odot(j_\ell \oplus 1).
\end{equation}


We may set $j_n = 0$ formally, as being beyond the range of an $n$-digit binary number. With this, we arrive at the following 
compact expression of \eqref{eq:shift_ham_dist}:
\begin{equation} \label{eq:shift_ham_compact}
    d_H(\vb{j}',\vb{k}') = \sum_{m = 0}^{n-1} c_m\odot(j_m \oplus 1 = \sum_{m=0}^{n-1}c_{m+1} \oplus c_m
\end{equation}
which is clear from \eqref{eq:shift_ham_dist} and \eqref{eq:carryover_closedform}. Let us take $\ell_*$ such that $j_{\ell_*} = 0$ and $j_m = 1$ for all $m < \ell_*$. This implies 

    \begin{equation}
    c_m = \Theta(\ell_* - m) \Rightarrow c_m \oplus c_{m+1} = \delta_{m,\ell_*}\label{eq:simple_proof}
    \end{equation}
where $\Theta$ is the Heaviside step function with the assumption that $\Theta(0) = 1$. This, together with \eqref{eq:shift_ham_compact}, leads to $d_H(\vb{j}',\vb{k}') = 1$, which concludes the proof.
\end{proof}

From this Lemma follows a simple property
\begin{obs}\label{obs:sep_under_S}
    Consider $a, b\in\mathbb{C}$. A (unnormalized) state $S\qty(a\ket{\vb{j}} + b \ket{\vb{j}+1})\in\mathcal{H}_2^{\otimes n}$, with $S$ defined by \eqref{cnots}, is a product state, for any $\vb{j}=0,\dots,2^{n-1}$.
\end{obs}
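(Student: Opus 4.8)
The plan is to combine the two lemmas already established. Observation~\ref{obs:sep_under_S} asserts that the state $S(a\ket{\vb{j}} + b\ket{\vb{j}+1})$ is a product state for every $\vb{j}$. By Lemma~\ref{lem:ham_1_sep}, a two-term superposition $a'\ket{\vb{p}} + b'\ket{\vb{q}}$ is separable precisely when $d_H(\vb{p},\vb{q})=1$. So the entire content reduces to showing that applying $S$ to the pair $\{\ket{\vb{j}},\ket{\vb{j}+1}\}$ produces a new pair whose Hamming distance is exactly one, which is precisely the statement of Lemma~\ref{lem:ham_1_dist}.

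Concretely, first I would write $S(a\ket{\vb{j}} + b\ket{\vb{j}+1}) = a S\ket{\vb{j}} + b S\ket{\vb{j}+1} = a\ket{\vb{j}'} + b\ket{\vb{k}'}$, using that $S$ is a linear (indeed permutation) operator that maps computational basis states to computational basis states, with the image labels $\vb{j}' , \vb{k}'$ defined exactly as in Lemma~\ref{lem:ham_1_dist} (taking $\vb{k} = \vb{j}+1$). Next I would invoke Lemma~\ref{lem:ham_1_dist} to conclude $d_H(\vb{j}',\vb{k}') = 1$. Finally, I would apply Lemma~\ref{lem:ham_1_sep} to the state $a\ket{\vb{j}'} + b\ket{\vb{k}'}$: since its two labels are at Hamming distance one, the state is separable (a product state), which is the desired conclusion.

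There is essentially no genuine obstacle here, since the observation is a two-line corollary of results proved immediately before it. The only point requiring mild care is bookkeeping: confirming that the $\vb{j}', \vb{k}'$ appearing in Lemma~\ref{lem:ham_1_dist} are indeed the images of $\vb{j}$ and its successor $\vb{j}+1$ under the \emph{same} $S$, so that the Hamming-distance result applies verbatim. One should also note that Lemma~\ref{lem:ham_1_sep} is phrased for arbitrary complex coefficients $a,b$, so the ``unnormalized'' qualifier and the generality over $a,b\in\mathbb{C}$ in the observation are automatically covered; the overall normalization is irrelevant to separability. Thus the proof is just the composition ``$S$ maps the adjacent pair to a distance-one pair (Lemma~\ref{lem:ham_1_dist}), and distance-one pairs give product states (Lemma~\ref{lem:ham_1_sep}).''
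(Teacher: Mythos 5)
Your proposal is correct and matches the paper's own proof, which consists of exactly the same two-step composition: apply Lemma~\ref{lem:ham_1_dist} to get that $S$ maps the pair $\{\vb{j},\vb{j}+1\}$ to labels at Hamming distance one, then apply Lemma~\ref{lem:ham_1_sep} to conclude separability. Your write-up merely spells out the linearity/relabeling bookkeeping that the paper leaves implicit.
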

\begin{proof}
    This follows directly by applying Lemmas \ref{lem:ham_1_sep} and \ref{lem:ham_1_dist}
\end{proof}

Applying Observation \ref{obs:sep_under_S} sequentially to all the states of  $\tilde{\mathfrak{B}}_i$ for $i = 1, 2, 3, 4$ and observing invariance of $\mathfrak{B}_0$ under $S$ completes the proof of Proposition \ref{prop:cnot_prop}. \qed

\section{Alternative construction of disentangled bases}
\label{sec:LOCC_prot}

The five measurement bases introduced in Section \ref{sec:5bases} can be implemented using local measurements and classical communication. To this end, below we present an alternative derivation of the 5 disentangled bases, equivalent to the one described beforehand up to irrelevant relabeling of the local qubit states and the qubit numbering. The first basis $\tilde{\mathfrak{B}}_0$ is the computational basis and therefore can be omitted.  

The next two measurement bases $\tilde{\mathfrak{B}}_1, \tilde{\mathfrak{B}}_2$ correspond to $H\otimes\mathbb{I}_2^{\otimes n-1}$ and $\widetilde{H}\otimes\mathbb{I}_2^{\otimes n-1}$, respectively, where we define $H = \op{+}{0} + \op{-}{1}$ and $\widetilde{H} = \op{+i}{0} + \op{-i}{1}$. It is beneficial to visualize it in terms of graph $\mathcal{G}$ -- this set of measurements allows us to resolve polarization identities for a complete set of parallel edges connecting vertices with $i_1 = 0$ to the respective vertices with $i_1 = 1$. In terms of heuristic, this provides us with a scaffolding, reducing the number of disconnected parts of the graph by half to $2^{n-1}$.

\begin{figure}[h!]
    \centering
    \includegraphics[width=0.8\linewidth]{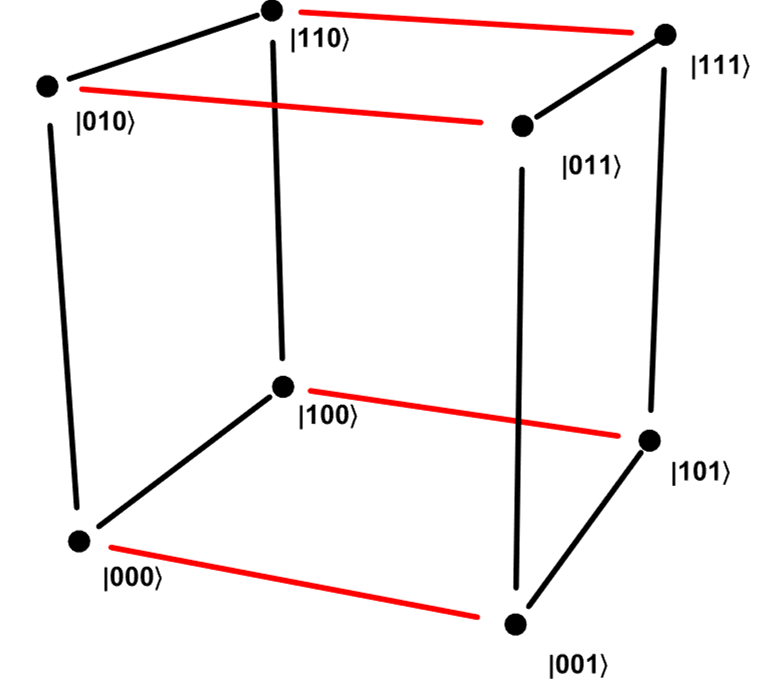}
    \caption{\justifying (Color online) Degree $3$ hypercubic graph with scaffolding edges highlighted in red.}
    \label{fig:hypercube_3q_se}   
\end{figure}

The construction of the remaining two measurement bases $\tilde{\mathfrak{B}}_3, \tilde{\mathfrak{B}}_4$ will follow in steps, resulting in a conditional structure of the basis. First, we want to further reduce the number of connected components of $\mathcal{G}$. We could easily half them by considering the measurement corresponding to the bases $\mathbb{I}_2\otimes G\otimes\mathbb{I}_2^{\otimes n-2}$ with $G = H, \widetilde{H}$; this is, however, excessive, as each pair of previously disconnected components is joined by a pair of edges. Thus, we fix the operator under consideration to be $O_{G,1} = \op{1}\otimes G\otimes\mathbb{I}_2^{\otimes n-2}$ and take its nonzero columns to define the first $2^{n-1}$ projectors for each of our measurements. In terms of logical structure, this corresponds to saying measure $\sigma_{X/Y}$ on the second qubit and $\sigma_Z$ on the remaining $n-2$ qubits if the result of measurement $\sigma_Z$ on the first qubit was 1 or, in terms of graph $\mathcal{G}$, we connected the vertices with $i_1 = 1, i_2 = 1$ with the corresponding vertices for $i_1, i_2 = 1$. Subsequent steps follow in a similar manner, resulting in a set of operators
\begin{equation}
    \begin{alignedat}{3}
        &O_{G,1} = & &&&\op{1}\otimes G\otimes\mathbb{I}^{\otimes n-2} \\
        &O_{G,2} = & &\op{0}&\otimes&\op{1}\otimes G\otimes\mathbb{I}_2^{\otimes n-3} \\
        &O_{G,n-2} = & &\op{0}^{\otimes n-3} &\otimes&\op{1}\otimes G \otimes \mathbb{I}_2 \\
        &O_{G,n-1} = & &\op{0}^{\otimes n-2} &\otimes&\op{1}\otimes G
    \end{alignedat}
\end{equation}
with $G = H,\widetilde{H}$. All of the above operators add up to a pair of bases
\begin{equation}
    \begin{aligned}
        \tilde{\mathfrak{B}}_3 & = \sum_{i=1}^{n-1} O_{H,i}, &
        \tilde{\mathfrak{B}}_4 & = \sum_{i=1}^{n-1} O_{\widetilde{H},i}
    \end{aligned}
\end{equation} 
The preparation procedure for the five bases $\tilde{\mathfrak{B}}_i$ can also be algorithmically enclosed in a simple conditional procedure, presented in Algorithm \ref{alg:cond_proc} and the logical circuit is presented in Fig.~\ref{fig:LOCC_Circuit}.

\begin{algorithm}[H]
\caption{Implementation of the (4+1) basis tomographic scheme} \label{alg:cond_proc}
\KwIn{$\ket{\psi}\in\mathcal{H}_2^{\otimes n}$ \tcp*{Pure input state}}
\KwIn{$b_C, b_S, b_M, b_H \in \{\text{True},\text{False}\}$ \tcp*{Control bits}}
\KwIn{$N\in\mathbb{N}$ \tcp*{Number of rounds}}
\KwIn{$\vb{c}\in\{0,1\}^{n\times N}$ \tcp*{Memory}}
\KwIn{$i,j \in \mathbb{N}$ \tcp*{Loop variables}}
\KwOut{Updated memory $\vb{c}$ after measurements}

$i \gets 1$\;
\While{$i \le N$}{
    $b_C \gets \text{rand}(\text{True},\text{False})$\;
    $b_S \gets \text{rand}(\text{True},\text{False})$\;
    $b_M, b_H \gets \text{False}$\;
    $j \gets 1$\;
    \While{$j \le n$}{
        \If{($j = 1 \wedge b_C) \vee (j \le (n-1) \wedge \neg b_H \wedge b_M) \vee (j=n \wedge \neg b_H)$}{
            $\ket{\psi} \gets H\ket{\psi}$\;
            $b_H \gets \text{True}$\;
            \If{$b_S$}{
                $\ket{\psi} \gets S\ket{\psi}$\;
            }
        }
        $b_M \gets \text{True}$\;
        $c_{ij} \gets \operatorname{measure}_j \ket{\psi}$ \tcp*{Measure $j$th subsystem}
        $j \gets j + 1$\;
    }
    $i \gets i + 1$\;
}

\Return{$\vb{c}$}\;

\end{algorithm}

\section{Optimization Algorithm}
\label{app:selection_procedure}


When considering real-world implementation of a quantum circuit, a well-known natural tension arises between abstract architecture-agnostic formulation of an algorithm and architecture constraints and gate errors stemming from the actual physical hardware.

Hence, the main goal of optimizing implementation is to ensure that non-local gates are placed in accordance with the physical constraints of the available hardware (for example, in some cases gates can be implemented from qubit one to qubit two, but not from two to one) and to use the qubits with the lowest possible gate error rates. The mathematical modeling of this problem can be represented as a graph. In this graph, the weights associated with the edges represent error rates associated to the nonlocal gates, and the node weights correspond to the error associated with local gates, both information that can be extracted from the hardware of the quantum processor.

The central challenge is how to efficiently select a subgraph spanned by $n$ nodes providing an optimal path. This problem is not new; in fact, it is a specific instance of the well-known \textit{Traveling Salesman Problem} (TSM). The algorithm designed to address this problem proceeds as follows:

\subsection*{Optimization Algorithm}

\begin{enumerate}
    \item Fix an initial node $v_0$.
    \item Identify all nodes reachable within $n$ steps from $v_0$, using the Dijkstra's algorithm~\cite{dijkstra}.
    \item Compute the total weight (or cost) of each of the resulting paths.
    \item Repeat the process for all nodes in the graph.
    \item Compare the total weights and select the path that minimizes the overall cost.
\end{enumerate}

It is worth noting that if the graph is directed, the computational complexity of the algorithm is significantly reduced, as the edge orientation introduces additional constraints that improve the efficiency of pathfinding.








\begin{algorithm}[H] 
\caption{Find the optimal path with minimum weight at fixed distance}
\KwIn{Graph $G$, distance $D$ \tcp*[r]{Quantity of Qubits}}
\KwOut{Path of length $D$ with minimal total weight}

$minWeight \gets \infty$\;
$bestPath \gets$ None\;
$Paths \gets \emptyset$\;

\ForEach{node in $G$}{
    \hspace{0.4cm}result $\gets$ SingleSourceDijkstra($G$, node)\;
    \hspace{0.4cm}\ForEach{path in result.paths}{
        \hspace{0.8cm}\If{$length(path) = D$}{
            \hspace{1.2cm}$Paths \gets Paths \cup \{ path \}$\;
        }
    }
}

\ForEach{path $P$ in $Paths$}{
    \hspace{0.4cm} $w \gets$ PathWeight($G$, $P$)\;
    \hspace{0.4cm} \If{$w < minWeight$}{
        \hspace{0.8cm}$minWeight \gets w$\;
        \hspace{0.8cm}$bestPath \gets P$\;
    }
}

\Return{$bestPath$}\;

\end{algorithm}

\subsubsection*{Results and Validation}


The algorithm was tested on the same states described in section V. The results are shown in Figure~\ref{fig:hellinger}. In the plot, the $y$-axis represents the \textit{Hellinger distances} between the experimentally obtained probability distributions and the ideal theoretical distribution. Control refers to the experiment sent to a randomized sequence of $n$ qubits with no further optimization. 

\begin{figure}[h]
    \centering
    \includegraphics[width=0.5\textwidth]{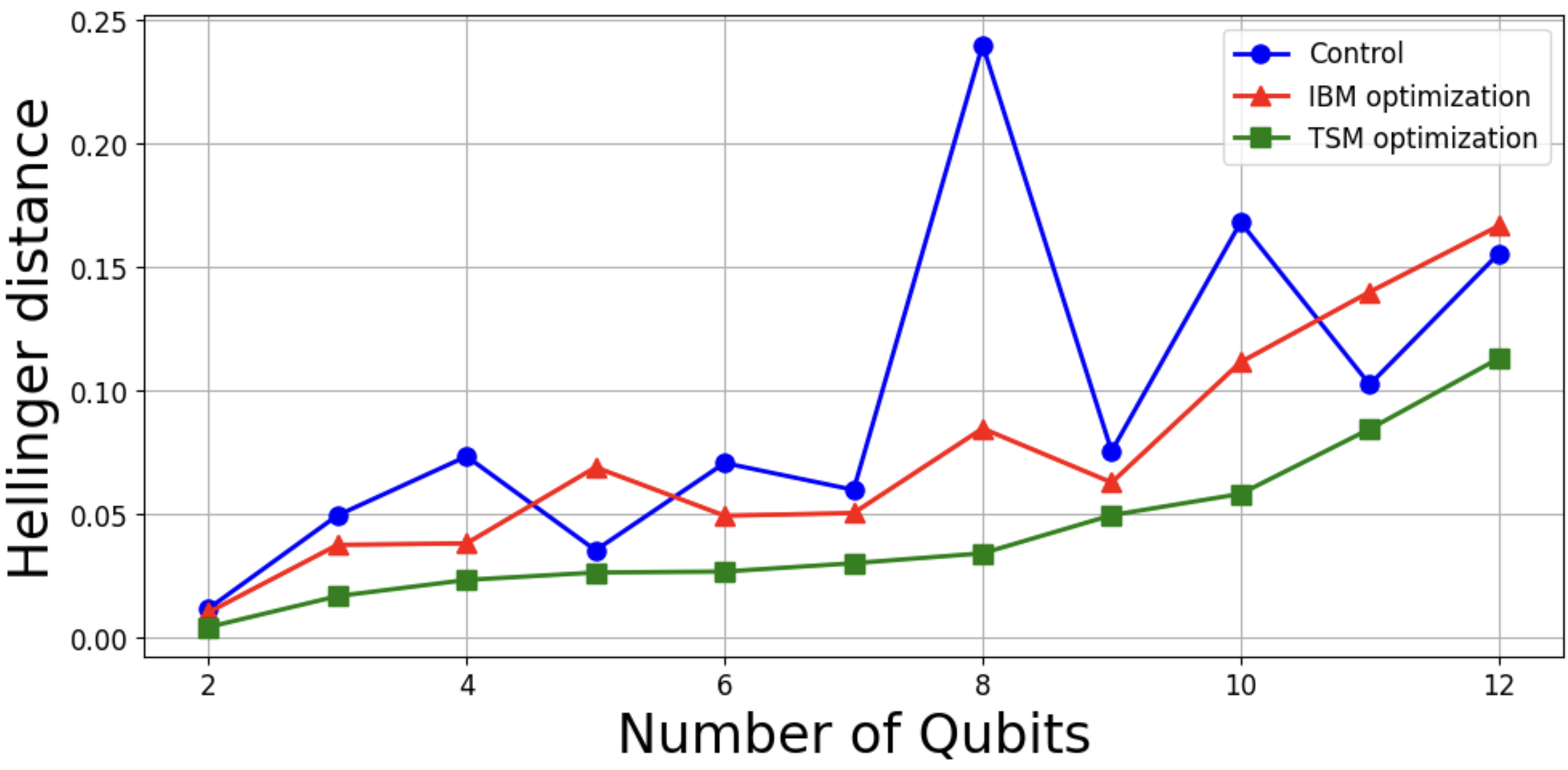}
    \caption{\justifying Comparison of Hellinger distances for each optimization strategy. Lower values indicate higher fidelity with respect to the theoretical distribution}
    \label{fig:hellinger}
\end{figure}

\section{Seven informationally complete bases for 3 qubits}\label{SM:7bases}

We provide the explicit expression of the 5 and $2n+1$ informationally complete bases for $n=3$ qubits. Let us start with the 5 bases:
$$\mathfrak{B}_0=\{|000\rangle,|001\rangle,|010\rangle,|011\rangle,|100\rangle,|101\rangle,|110\rangle,|111\rangle\},$$ $$\mathfrak{B}_1=\{|00,\pm\rangle,|01,\pm\rangle\,|10,\pm\rangle,|11,\pm\rangle\},$$ $$\mathfrak{B}_2=\{|00,\pm i\rangle,|01,\pm i\rangle\, |10,\pm i\rangle,|11,\pm i\rangle\},$$ 
$$\mathfrak{B}_3=\{|000\rangle\pm|100\rangle, |001\rangle\pm|011\rangle,$$
$$|010\rangle\pm|110\rangle, |111\rangle\pm|101\rangle\}$$
$$\mathfrak{B}_4=\{|000\rangle\pm i|100\rangle, |001\rangle\pm i|011\rangle,$$
$$|010\rangle\pm i|110\rangle, |101\rangle\pm i|111\rangle\}$$

Three qubit systems is the smallest case for which the $2n+1$ bases differs from the separable 5 bases. We recall that the $2n+1$ bases arise as a solution (\ref{states_pol}) under the full separability constraint established in Lemma \ref{lem:ham_1_sep}, where 48 tensor product vectors arise for $n=3$, that can be sorted into 6 orthonormal bases in a unique way. Complemented by the computational basis $\mathfrak{B}_0$, these bases are defined as follows:

$$\mathfrak{B}_0=\{|000\rangle,|001\rangle,|010\rangle,|011\rangle,|100\rangle,|101\rangle,|110\rangle,|111\rangle\},$$ $$\mathfrak{B}_1=\{|00,\pm\rangle,|01,\pm\rangle\,|10,\pm\rangle,|11,\pm\rangle\},$$ $$\mathfrak{B}_2=\{|00,\pm i\rangle,|01,\pm i\rangle\, |10,\pm i\rangle,|11,\pm i\rangle\},$$ 
$$\mathfrak{B}_3=\{|0,\pm,0\rangle,|0,\pm,1\rangle, |1\pm,0\rangle,|1\pm,1\rangle\},$$ 
$$\mathfrak{B}_4=\{|0,\pm i,0\rangle,|0,\pm i,1\rangle, |1,\pm i,0\rangle,|1,\pm i,1\rangle\},$$
$$\mathfrak{B}_5=\{|\pm,00\rangle,|\pm,01\rangle, |\pm,10\rangle,|\pm,11\rangle\},$$ 
$$\mathfrak{B}_6=\{|\pm i,00\rangle,|\pm i,01\rangle, |\pm i,10\rangle,|\pm i,11\rangle\}.$$\vspace{0.5cm}




\section{Graph optimization}
\label{app:errors}
        
Before moving on to discuss the process of error minimization, let us introduce an additional graph-theoretic definition
    \begin{defn}
        Consider a graph $\mathcal{G} = \qty{V, E}$ and a pair of vertices $\vb{j},\,\vb{k}\in V$. A connected subgraph $P^{(\vb{jk})} = \qty{V^{(\vb{jk})},E^{(\vb{jk})}}\subset \mathcal{G}$ is called \textbf{a path} between $\vb{j}$ and $\vb{k}$ if $\vb{j},\,\vb{k}\in V^{(\vb{jk})}$, and the vertices $\vb{l} \notin \qty{\vb{j},\vb{k}}$ have exactly two neighbors, and $\vb{j},\,\vb{k}$ have only one neighbor each.
    \end{defn}
    
    \begin{defn}
        A connected subgraph $C = \qty{V,E}\subset \mathcal{G}$ is called \textbf{a cycle} if each vertex $\vb{v}\in V$ has exactly two neighbors.
    \end{defn}
    
    \noindent 
    
    Note that this definition matches an intuition that endpoints of a path should have only outgoing edges -- hence a single neighbor -- while all the midpoints of the path should have incoming and outgoing edges, hence exactly two neighbors. For convenience, we refer to a set of all paths between $\vb{j}$ and $\vb{k}$ as $\operatorname{paths}^{(\vb{j}\vb{k})}$. 

    The number of edges in a path, $\abs{E^{(\vb{jk})}} = \abs{V^{(\vb{jk})}} - 1$, has a natural interpretation as the length of a path, according to a discrete metric. With this, we can consider a set $\mathcal{P}$ of paths of minimal length, defined formally as
    \begin{equation}\label{eq:min_lenght_paths}
        \mathcal{P}^{(\vb{jk})} = \qty{P^{(\vb{jk})}\in\operatorname{paths}^{(\vb{j}\vb{k})}: \abs{E^{(\vb{jk})}} = \!\!\!\!\min_{\operatorname{paths}^{(\vb{j}\vb{k})}} \abs{E^{(\vb{jk})}}} .
    \end{equation}

    
The hypercube is a Hamiltonian graph, so it is always possible to find a path between any pair of its nodes. Furthermore, the variance of the amplitudes $|\langle\psi|\vb{j}\rangle|$ and $|\langle\psi|\vb{j}'\rangle|$ involved in the polarization identity (\ref{pol_identity}) are proportional to the weights $w_{\vb{j}}=|\langle\psi|\vb{j}\rangle|^2$ and $w_{\vb{j}'}=|\langle\psi|\vb{j}'\rangle|^2$, respectively. In addition, the variance in the complex phase of the coefficient $a_{\vb{j}}=\langle\psi|\vb{j}\rangle$ depends on the distance between the vertices $v_{\vb{j}}$ and $v_{\vb{j}'}$ along the graph. See Appendix \ref{apx:error_propagation} for details. In order to minimize error propagation, we aim to find a path with minimum length between $v_{\vb{j}}$ and $v_{\vb{j}'}$ but also considering the largest possible weights along the path. That is,
\begin{equation}\label{eq:optimal_rezo}
\mathcal{P}_{opt}^{(\vb{j}\vb{j}')}= \max_{\mathcal{P}^{(\vb{j}\vb{j}')}}\qty{\sum_{\vb{l}} \abs{a_{\vb{l}}}^2; \vb{l}\in V^{(\vb{j}\vb{j}')}},
\end{equation}
where $\{\mathcal{P}^{(\vb{j}\vb{j}')}(V^{(\vb{j}\vb{j}')},E^{(\vb{j}\vb{j}')})\}$ is the set of paths from vertex $v_{\vb{j}}$ to vertex $v_{\vb{j}'}$ that has the minimum possible length, as defined in \eqref{eq:min_lenght_paths}. For a detailed description on error propagation, see Appendix \ref{apx:error_propagation}

\section{Analytic tomographic reconstruction}\label{app:analityc_sol}

Let \(\mathcal{G}=(V,E)\) be a chosen connected graph with \(|V|=2^{n}\) vertices. To reconstruct the state, the following coupled non-linear system of equations should be solved:
\begin{equation}  4\,a_{\vb{j}}\,a_{\vb{j}'}^{*}=\Lambda_{\vb{j},\vb{j}'}\qquad\text{for every }\{{\vb{j}},{\vb{j}}'\}\in E.
  \label{eqs_pol}
\end{equation}
It has a unique solution if and only if the graph $\mathcal{G}$ is connected. Without loss of generality, we restrict our attention to the minimal number of edges that guarantees a connected graph, i.e. $2^n-1$. For convenience, we relabel vertices such that vertex $v_{\vb{j}}$ is connected with vertex $v_{\vb{j}+1}$, for all $\vb{j}=0,\dots,2^n-1$.

To simplify notation, we redefine $\Lambda_{\vb{j}} := \Lambda_{\vb{j}-1,1}$. So, the solution to the system (\ref{eqs_pol}) is given by

\begin{equation}
  a_{\vb{j}}
  \;=\;
  a_0\,
  \frac{\Lambda_2^{*}\,\Lambda_4^{*}\,\cdots\,\Lambda_{\vb{j}}^{*}}
       {\Lambda_1\,\Lambda_3\,\cdots\,\Lambda_{\vb{j}-1}},
  \label{eq:closed-even}
\end{equation}
when $\vb{j}>0$ is even, and
\begin{equation}
  a_{\vb{j}}
  \;=\;
  \frac{1}{4a_0}
  \frac{\Lambda_1^{*}\,\Lambda_3^{*}\,\cdots\,\Lambda_{\vb{j}}^{*}}
       {\Lambda_2\,\Lambda_4\,\cdots\,\Lambda_{\vb{j}-1}},
  \label{eq:closed-odd}
\end{equation}
when $\vb{j}$ is odd. Here, we assumed that $a_0\geq0$ and $\Lambda_0=1$.

We emphasize here the crucial role that the polarization identity plays for reconstructing the quantum state, as it allows us to analytically solve the coupled system of equations (\ref{eqs_pol}). Otherwise, the method would require to solve a system of equations, \textit{procedure that has an exponential cost as a function of the number of qubits $n$}.

The analytical reconstruction described above is not unique, as multiple connected subgraphs of $\mathcal{G}$
can be used. Consequently, in the presence of statistical noise, different choices yield different reconstruction fidelities. To improve robustness, one might evaluate the fidelity obtained from each candidate subgraph and select the one that produces maximum fidelity. This selection step is a post-processing stage whose cost increases exponentially with the number of qubits.
$n$. In this work, we successfully applied post-processing up to 12 qubits on a standard personal computer, demonstrating feasibility at that scale. We emphasize that this post-processing stage is not required to implement our protocols, but it is recommended when the goal is to maximize the accuracy and overall quality of the reconstructed state.

As a final comment of this section, note from Eqs. (\ref{eqs_pol}), (\ref{eq:closed-even}) and (\ref{eq:closed-odd}) that, apparently, our tomographic methods work efficiently whenever we deal with a state $\ket{\psi}$ for which all entries in the computational basis are non-zero. However, the system of equations (\ref{eqs_pol}) can still be solved when $\ket{\psi}$ has some zero entries, as long as the associated graph is connected. Interestingly, from this fact we clearly see why when two non-consecutive coefficients $a_{\vb{j}}$ and $a_{\vb{j}+1}$ vanish, then the protocol from~\cite{goyeneche2015five} fails, as in this case the graph becomes disconnected, and consequently, the state cannot be univocally estimated.

For the $2n+1$ bases protocol, the reconstruction works up to $2^{n-1}-1$ zero entries in $\ket{\psi}$, whereas two zero entries in $\ket{\psi}$, associated with disconnected vertices, produce a disconnected graph for the 5 bases. To avoid this problem, one can reconstruct a state $\ket{\psi}'=U_1\otimes\dots\times U_n\ket{\psi}$, where $U_1$ to $U_n$ are suitable local unitary operations such that $\ket{\psi}'$ does not have zero entries. For example, a suitable choice for the GHZ state is a product of local Hadamard gates. Once $\ket{\psi}'$ is reconstructed, the target state is obtained as $\ket{\psi}=U^{\dag}_1\otimes\dots\times U^{\dag}_n\ket{\psi}'$, with a negligible additional propagation of errors introduced by the local gates $U_1,\dots,U_n$.

\section{Purity certification}\label{SM:purity}

    Our work introduces two tomographic protocols that provide a pure quantum state as an outcome. Despite the global state of an ideally isolated quantum system can be pure, systems are affected by several sources of errors and decoherence in practice, leading to mixed quantum states. An undesired situation for our tomographic protocols would be to prepare a quantum state that, in principle, should be nearly pure but in practice it turns out to be highly mixed, for an unknown reason. In order to prevent this kind of undesired situations,  in this Appendix we introduce a protocol that certifies that the collected statistical data is compatible with highly pure quantum states. This simple and efficient method was introduced in a previous work~\cite{goyeneche2015five} and can be generalized in the same way to the protocols introduced here. 
    
    The certification protocol is based on the following simple observation: entries of a quantum state $\rho\geq 0$ satisfy the following family of Cauchy-Schwarz inequalities:
    \begin{equation}\label{eq:CS_ineq_rho}
        \abs{\rho_{\vb{j}\vb{k}}}^2\leq \abs{\rho_{\vb{j}\vb{j}}}\abs{\rho_{\vb{k}\vb{k}}}.
    \end{equation}
Note that saturation of (\ref{eq:CS_ineq_rho}) for $e_{\vb{j},\vb{k}}$ defining edges of a connected graph $\mathcal{G}$ implies that $\rho$ is a rank one matrix, thus producing a pure quantum state. In other words, such a subset of saturating inequalities would imply that all columns of $\rho$ are linearly dependent. For both protocols introduced in this work, the graph is connected, so the certification protocol applies.

A key feature of our tomographic methods is that all entries $\rho_{\vb{j},\vb{k}}$ such that $e_{\vb{j},\vb{k}}\in\mathcal{G}$ are reconstructed, regardless of whether the prepared state $\rho$ is pure or not. In the case of a pure state, this information is enough to reconstruct the full density matrix $\rho$.

A quantity that reflects how close to pure is a given quantum state is provided by the following expression  
    \begin{equation}
\mathcal{P}=\sqrt{\sum_{e_{\vb{j},\vb{k}}\in\mathcal{G}}\qty(\abs{\rho_{jk}}^2-\rho_{jj}\rho_{kk})^2}.
    \end{equation}
Clearly, $\mathcal{P}=0$ if and only if $\rho$ is pure, as this implies the saturation of the Cauchy-Schwarz inequalities (\ref{eq:CS_ineq_rho}) for $e_{\vb{j},\vb{k}}\in\mathcal{G}$. However, we point out that $\mathcal{P}$ is not a monotonically increasing function of purity of the state. In fact, when $\rho$ is highly mixed, numerical simulations indicate that $\mathcal{P}$ can slightly increase when purity decreases. However, for highly pure states $\mathcal{P}$ is monotonically increasing with purity. 

\section{Trapped ions experiment tomography results}\label{SM:IonTrap}
Fig. \ref{fig:density2q} and Fig. \ref{fig:density3q} show the full density matrices of the states studied in the trapped ion experiments, including the ideal states attained, the full tomographic reconstructions using Pauli Basis tomography, and the tomographic reconstruction using the 2$n$+1 basis method, which assumes a pure state. 
\begin{figure*}[t]
    \centering
    \includegraphics[width=\textwidth]{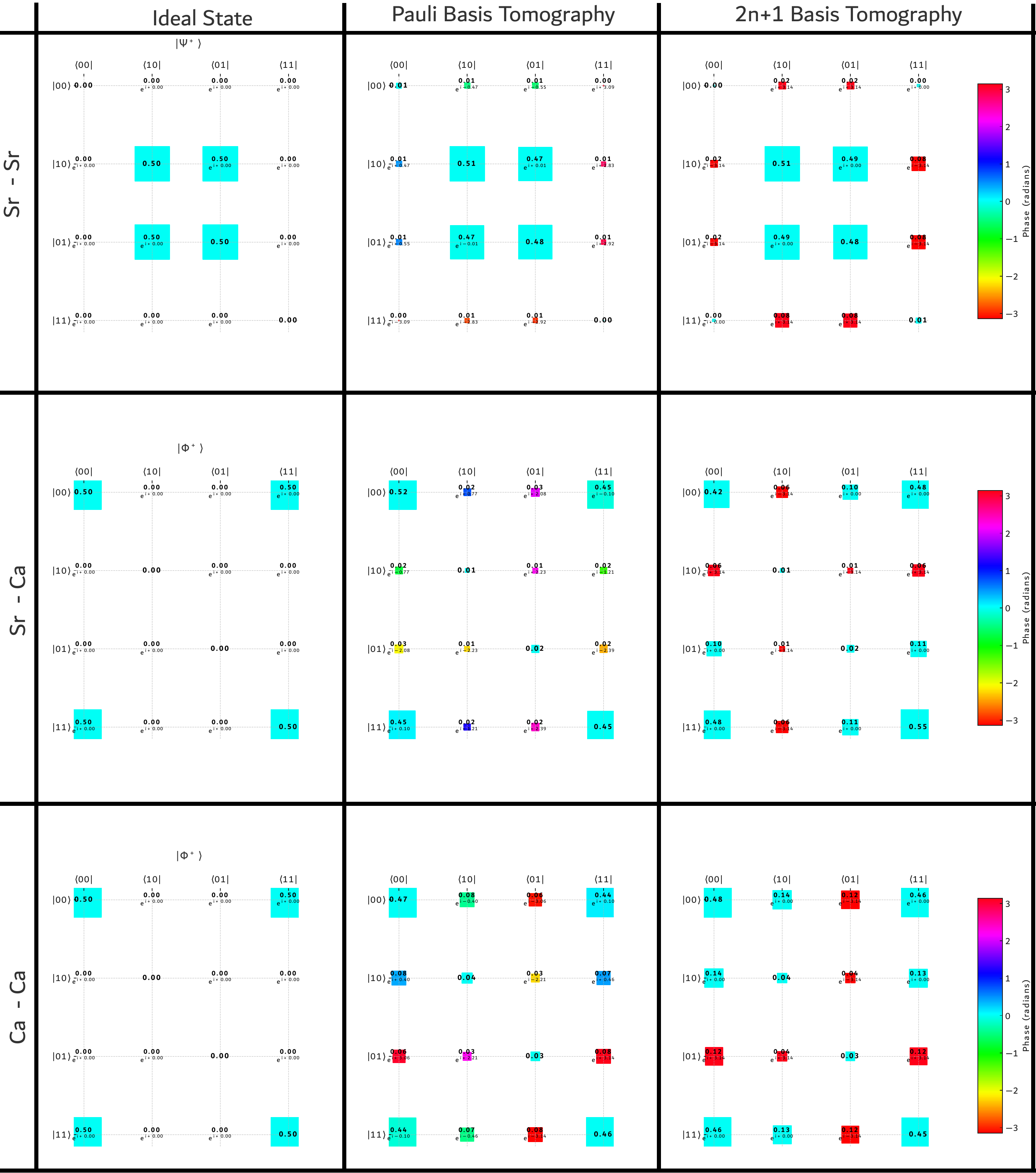}
    \caption{\justifying Density matrices of the different two-qubit states prepared in the trapped ion experiment. The magnitude of each matrix element is represented by the size of a square, while the phase of each element is plotted as a color.}
    \label{fig:density2q}  
\end{figure*}

\begin{sidewaysfigure*}[t]
    \centering
    \includegraphics[width=\textwidth]{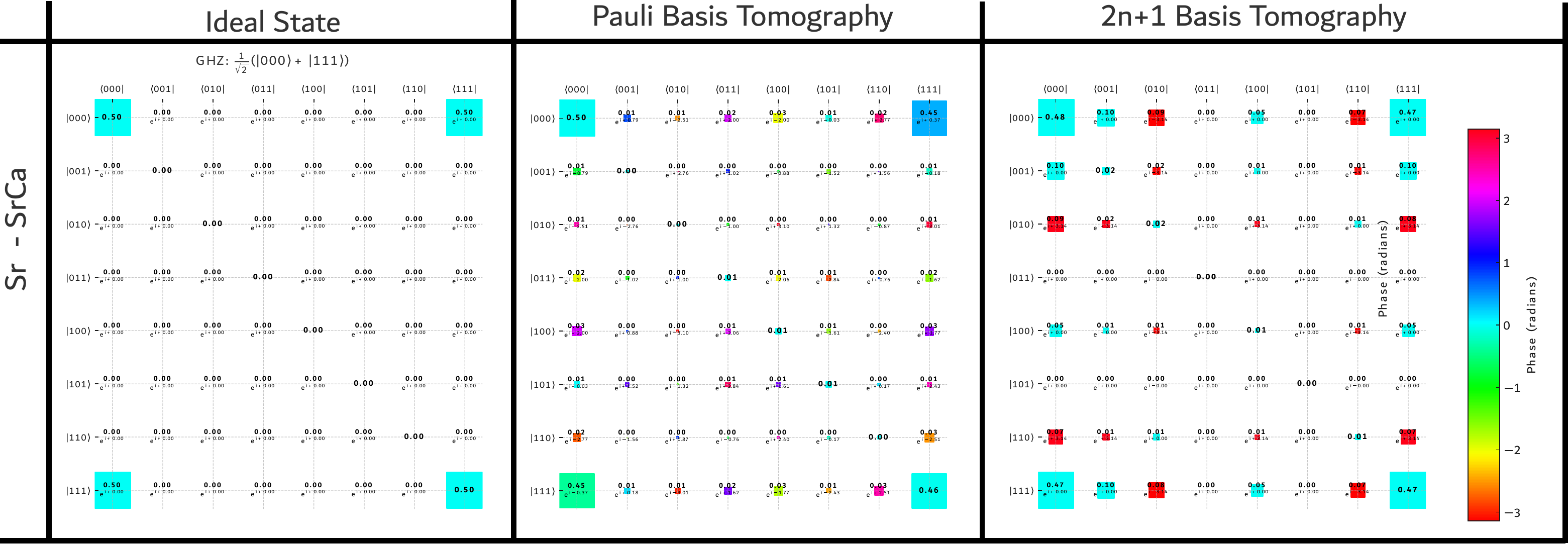}
    \caption{\justifying Density matrices of the three-qubit states prepared in the trapped ion experiment. The magnitude of each matrix element is represented by the size of a square, while the phase of each element is plotted as a color.}
    \label{fig:density3q}   
\end{sidewaysfigure*}

\section{Error propagation}
\label{apx:error_propagation}

    For a $n$-qubit quantum system with a quantum state given by
    \begin{equation}
        \ket{\psi}=\sum_{\vb{j}=0}^{2n - 1} a_{\vb{j}}\ket{\vb{j}}
    \end{equation}
    We will show how the error propagates when the estate estimation is performed. First, consider the polarization identity
    
        \begin{eqnarray}
            2a_{\tilde{\vb{k}}} a_{\vb{j}}^*  &=& \Lambda_{\tilde{\vb{k}}\vb{j}} \\  &=& \qty(p_{\tilde{\vb{k}}\vb{j}}^{+}-p_{\tilde{\vb{k}}\vb{j}}^{+})+i\qty(\tilde{p}_{\tilde{\vb{k}}\vb{j}}^{+}-\tilde{p}_{\tilde{\vb{k}}\vb{j}}^{+})\\ \label{eq:polar_id_re_im} &=&\Re{\Lambda_{\tilde{\vb{k}}\vb{j}}}+\Im{\Lambda_{\tilde{\vb{k}}\vb{j}}},
        \end{eqnarray}
        
        where ${p}_{\tilde{\vb{k}}\vb{j}}^{\pm}=\abs{a_{\tilde{\vb{k}}}\pm a_{\vb{j}}}$ and $\tilde{p}_{\tilde{\vb{k}}\vb{j}}^{\pm}=\abs{a_{\tilde{\vb{k}}}\pm ia_{\vb{j}}}$, which are measured directly by applying the projectors $\ket{\tilde{\vb{k}}}\pm\ket{\vb{j}}$ $\qty(\ket{\tilde{\vb{k}}}\pm i\ket{\vb{j}})$, to a quantum state. Then we calculate the $\Lambda_{\tilde{\vb{k}}\vb{j}}$ error propagation for the real and imaginary parts separately. For a function $f(\Vec{x})$ that depends on the variables $m$ $\{x_j\}_{j=1}^{m}$, to obtain an error propagation formula, we apply the following formula:
        

        \begin{align} 
        (\Delta f)^2 &= \sum_{j=1}^{m}\qty(\pdv{f}{x_j})^2 (\Delta x_j)^2 \notag \\
        &\quad + \sum_{\substack{j,k=1 \\ j \neq k}}^{m} \qty(\pdv{f}{x_j}) \qty(\pdv{f}{x_k}) \,\text{Cov}(x_j,x_k). \label{eq:error_prop}
        \end{align}

        In the case of Eq. \eqref{eq:polar_id_re_im}, let us start with the real part: 
         \begin{align}\label{eq:id_pol_errors}
             \qty(\Delta\Re{\Lambda_{\tilde{\vb{k}}\vb{j}}})^2&=\qty(\pdv{Re{\Lambda_{\tilde{\vb{k}}\vb{j}}}}{p_{\tilde{\vb{k}}\vb{j}}^{+}})^2\qty(\Delta p_{\tilde{\vb{k}}\vb{j}}^+)^2
             \notag \\
             &+\qty(\pdv{Re{\Lambda_{\tilde{\vb{k}}\vb{j}}}}{p_{\tilde{\vb{k}}\vb{j}}^{-}})^2\qty(\Delta 
             p_{\tilde{\vb{k}}\vb{j}}^-)^2 \nonumber
             \notag \\
             &+2 \qty(\pdv{Re{\Lambda_{\tilde{\vb{k}}\vb{j}}}}{p_{\tilde{\vb{k}}\vb{j}}^{+}})\qty(\pdv{Re{\Lambda_{\tilde{\vb{k}}\vb{j}}}}{p_{\tilde{\vb{k}}\vb{j}}^{-}}) \text{Cov}\qty(p_{\tilde{\vb{k}}\vb{j}}^+, p_{\tilde{\vb{k}}\vb{j}}^-) 
         \end{align}
         
         Note that 
         $$
         \qty(\pdv{Re{\Lambda_{\tilde{\vb{k}}\vb{j}}}}{p_{\tilde{\vb{k}}\vb{j}}^{+}})=1,\, \qty(\pdv{Re{\Lambda_{\tilde{\vb{k}}\vb{j}}}}{p_{\tilde{\vb{k}}\vb{j}}^{-}})=-1.
         $$
         Thus, we have
         
         \begin{align}             
         \qty(\Delta\Re{\Lambda_{\tilde{\vb{k}}\vb{j}}})^2=&\qty(\Delta p_{\tilde{\vb{k}}\vb{j}}^+)^2+\qty(\Delta p_{\tilde{\vb{k}}\vb{j}}^-)^2 
         \notag \\
         &-2\text{Cov}\qty(p_{\tilde{\vb{k}}\vb{j}}^+, p_{\tilde{\vb{k}}\vb{j}}^-)
         \end{align}
         
        Assuming a multinomial probability distribution we have        $$\qty(\Delta p_{\tilde{\vb{k}}\vb{j}}^\pm)^2=\frac{p_{\tilde{\vb{k}}\vb{j}}^\pm\qty(1-p_{\tilde{\vb{k}}\vb{j}}^\pm)}{{n}}$$
        where $n$ is the number of experiments measured on the projectors connecting $a_{\tilde{\vb{k}}}$ and $a_{\vb{j}}$. Also, we have $$\text{Cov}\qty(p_{\tilde{\vb{k}}\vb{j}}^+, p_{\tilde{\vb{k}}\vb{j}}^-)=-\frac{p_{\tilde{\vb{k}}\vb{j}}^+, p_{\tilde{\vb{k}}\vb{j}}^-}{{n}}.$$
        Therefore, the error for the real part of 
        \begin{equation}            \qty(\Delta\Re{\Lambda_{\tilde{\vb{k}}\vb{j}}})^2=\frac{\qty(p_{\tilde{\vb{k}}\vb{j}}^+ + p_{\tilde{\vb{k}}\vb{j}}^-)-\qty(p_{\tilde{\vb{k}}\vb{j}}^+-p_{\tilde{\vb{k}}\vb{j}}^-)^{2}}{n}.
        \end{equation}
        Errors estimation for the imaginary part $\Delta\Im{\Lambda_{\tilde{\vb{k}}\vb{j}}}$ is analogous to the real part but replacing $p_{\tilde{\vb{k}}\vb{j}}^\pm$ with  $\tilde{{p}}_{\tilde{\vb{k}}\vb{j}}^\pm$.\\ 

        On the other hand, given that $\Lambda_{\tilde{\vb{k}}\vb{j}}$ is complex
        \begin{equation}
            \Lambda_{\tilde{\vb{k}}\vb{j}}=\abs{\Lambda_{\tilde{\vb{k}}\vb{j}}}e^{i\beta_{\tilde{\vb{k}}\vb{j}}}
        \end{equation}
        It is also possible to estimate the error for the absolute value $\Lambda_{\tilde{\vb{k}}\vb{j}}$ and its complex phase. That is,


        \begin{equation}
            \abs{\Lambda_{\tilde{\vb{k}}\vb{j}}}=\abs{c_{\tilde{\vb{k}}}}\abs{c_{\vb{j}}}=\sqrt{p_{\tilde{\vb{k}}}}\sqrt{p_{\vb{j}}}
        \end{equation}
         where  $p_{\vb{j}}:=\abs{c_{\vb{j}}}^2$. Then,
        \begin{align*}              \qty(\Delta\abs{\Lambda_{\tilde{\vb{k}}\vb{j}}})^2 &=\qty(\pdv{\abs{\Lambda_{\tilde{\vb{k}}\vb{j}}}}{p_{\tilde{\vb{k}}}})^2\qty(\Delta p_{\tilde{\vb{k}}})^2+\qty(\pdv{\abs{\Lambda_{\tilde{\vb{k}}\vb{j}}}}{p_{\vb{j}}})^2\qty(\Delta p_{\vb{j}})^2 \\ &+2\qty(\pdv{\abs{\Lambda_{\tilde{\vb{k}}\vb{j}}}}{p_{\tilde{\vb{k}}}})\qty(\pdv{\abs{\Lambda_{\tilde{\vb{k}}\vb{j}}}}{p_{\vb{j}}})\text{Cov}\qty(p_{\tilde{\vb{k}}}p_{\vb{j}}),
        \end{align*}
        where we get
        \begin{equation}            \qty(\Delta\abs{\Lambda_{\tilde{\vb{k}}\vb{j}}})^2 = \frac{p_{\tilde{\vb{k}}}+p_{\vb{j}}-4p_{\tilde{\vb{k}}}p_{\vb{j}}}{n}
        \end{equation}
        
        Now, for the phase
        \begin{equation}
            \beta_{\tilde{\vb{k}}\vb{j}}=\arctan{\qty(\frac{\Im{\Lambda_{\tilde{\vb{k}}\vb{j}}}}{\Re{\Lambda_{\tilde{\vb{k}}\vb{j}}}})}
        \end{equation}
        with a variation given by
        \begin{align}
            \qty(\Delta\beta_{\tilde{\vb{k}}\vb{j}})^2 &=
            \qty(\pdv{\beta_{\tilde{\vb{k}}\vb{j}}}{\Re{\Lambda_{\tilde{\vb{k}}\vb{j}}}})^2\qty(\Delta\Re{\Lambda_{\tilde{\vb{k}}\vb{j}}})^2+ \\ &
            \qty(\pdv{\beta_{\tilde{\vb{k}}\vb{j}}}{\Im{\Lambda_{\tilde{\vb{k}}\vb{j}}}})^2\qty(\Delta\Im{\Lambda_{\tilde{\vb{k}}\vb{j}}})^2
        \end{align}
        Resulting in
        \begin{align*}
            \qty(\Delta\beta_{\tilde{\vb{k}}\vb{j}})^2 &=\frac{1}{\abs{\Lambda_{\tilde{\vb{k}}\vb{j}}}^4}\left[\Im{\Lambda_{\tilde{\vb{k}}\vb{j}}}^2\qty(\Delta\Re{\Lambda_{\tilde{\vb{k}}\vb{j}}})^2\right. \\ & + \left. \Re{\Lambda_{\tilde{\vb{k}}\vb{j}}}^2\qty(\Delta\Im{\Lambda_{\tilde{\vb{k}}\vb{j}}})^2\right]
        \end{align*}

        Now, consider the particular case of a real quantum state $|\psi\rangle$, this is, $a_{\vb{j}}\in\mathbb{R}$ for every $\vb{j}\in \qty{\vb{0},\dots, 2^{n}-1}$. In this case, we have $\tilde{p}_{\tilde{\vb{k}}\vb{j}}^{+}=\tilde{p}_{\tilde{\vb{k}}\vb{j}}^{-}=a_{\tilde{\vb{k}}}^2+a_{\vb{j}}^2$ and
        \begin{equation}
            \qty(\Delta\beta_{\tilde{\vb{k}}\vb{j}})^2=\frac{2\tilde{p}_{\tilde{\vb{k}}\vb{j}}^{+}}{n\qty(p_{\tilde{\vb{k}}\vb{j}}^{+}-p_{\tilde{\vb{k}}\vb{j}}^{-})^2}.
        \end{equation}
        Moreover, if we consider that all the coefficients $a_{\vb{j}}=1/\sqrt{2^n} \forall\, \vb{j}\in \qty{\vb{0},\dots, 2^{n}-1}$. Then, we have $p_{\tilde{\vb{k}}\vb{j}}^{+}=4a_{\vb{j}}^{2}$ and $p_{\tilde{\vb{k}}\vb{j}}^{-}=0$. In this case, the variance in the phase is 
        \begin{equation}
            \qty(\Delta\beta_{\tilde{\vb{k}}\vb{j}})^2=\frac{1}{4n\abs{a_{\vb{j}}}^{2}}
        \end{equation}

        Now, consider
        \begin{align*}
            2a_{\tilde{\vb{k}}} a_{\vb{j}}^*=\Lambda_{\tilde{\vb{k}}\vb{j}}\\
            a_{\vb{j}}=\frac{\Lambda_{\tilde{\vb{k}}\vb{j}}^{*}}{2a_{\tilde{\vb{k}}}^{*}}
        \end{align*}    

        Let us show two cases. First if $a_{\tilde{\vb{k}}}\in\mathbb{R}$ Then
        \begin{align*}          
        \abs{a_{\vb{j}}}e^{i\alpha_{\vb{j}}}&=\frac{
        \abs{\Lambda_{\tilde{\vb{k}}\vb{j}}
        }e^{-i\beta_{\tilde{\vb{k}}\vb{j}}}}{2a_{\tilde{\vb{k}}}}.\\
        \end{align*}  
        Then, we have
        \begin{align*}
             \alpha_{\vb{j}}&=-\beta_{\tilde{\vb{k}}\vb{j}}\\
            \qty(\Delta\alpha_{\vb{j}})^2&=\qty(\Delta\beta_{\tilde{\vb{k}}\vb{j}})^2\\
        \end{align*}
        This is the case for all any coeficient directly connected to $a_{\tilde{\vb{k}}}$ through the projectors. 
        Moreover, for a coefficient $a_{\vb{l}}\in \mathbb{C}$ connected to $a_{\vb{j}}$ but not directly to $a_{\tilde{\vb{k}}}$ we have
        \begin{align*}
           2 a_{\vb{j}}a_{\vb{l}}^{*}&=\Lambda_{\vb{j}\vb{l}} \\
           \abs{a_{\vb{l}}}e^{i\alpha_{\vb{l}}}&=\frac{\abs{\Lambda_{\vb{j}\vb{l}}}}{\abs{a_{\vb{j}}}}e^{i(\alpha_{\vb{j}}-\beta_{\vb{j}\vb{l}})}
        \end{align*}
        And the error of the phase is given by:
        \begin{align*}           \qty(\Delta\alpha_{\vb{l}})^2=\qty(\Delta\alpha_{\vb{j}})^2+\qty(\Delta\beta_{\vb{j}\vb{l}})^2\\
        \qty(\Delta\alpha_{\vb{l}})^2=\qty(\Delta\beta_{\tilde{\vb{k}}\vb{j}})^2+\qty(\Delta\beta_{\vb{j}\vb{l}})^2\\
        \end{align*}  

        In this case, we can see that error propagation grows with the distance between the elements, in other words, the number of polarization identities we have to apply to find $\alpha_{\vb{l}}$. Also,  as $ \qty(\Delta\alpha_{\vb{j}})^2\propto \abs{\Lambda_{\tilde{\vb{k}}}\vb{j}}^{-4}$ the error grows when neither $p_{\tilde{\vb{k}}}$ or $p_{\vb{j}}$ are small. Therefore, to minimize error propagation it is recommended to find the shortest paths between element $a_{\vb{l}}$ and $a_{\tilde{\vb{k}}}$ that avoid elements with low probability.

\hfill

\bibliography{references}

\newpage

\end{document}